\def\be{\begin{equation}}
\def\ee{\end{equation}}
\def\ba{\begin{array}{c}}
\def\ea{\end{array}}
\def\p{\partial}
\def\ben{$$}
\def\een{$$}
\newcommand{\bea}{\begin{eqnarray}}
\newcommand{\eea}{\end{eqnarray}}
\newcommand{\bbr}{\br\!\br}
\newcommand{\kkt}{\kt\!\kt}
\newcommand{\pbr}{\prec\!}
\newcommand{\pkt}{\!\succ\,\,}
\newcommand{\kt}{\rangle}
\newcommand{\br}{\langle}
\newtheorem{thm}{Theorem}
\newtheorem{prop}[thm]{Proposition}
\newenvironment{proof}{\noindent {\bf Proof}}{\hfill$\square$\vspace{3mm}\endtrivlist}
\begin{document}

\titlepage


 \begin{center}{\Large \bf

Anisotropy and asymptotic degeneracy of the physical-Hilbert-space
inner-product metrics in an exactly solvable crypto-unitary quantum model

  }\end{center}


 \begin{center}

\vspace{8mm}

  {\bf Miloslav Znojil} $^{1,2,3}$

\end{center}

\vspace{8mm}

  $^{1}$
 {The Czech Academy of Sciences,
 Nuclear Physics Institute,
 Hlavn\'{\i} 130,
250 68 \v{R}e\v{z}, Czech Republic, {e-mail: znojil@ujf.cas.cz}}


 $^{2}$
 {Department of Physics, Faculty of
Science, University of Hradec Kr\'{a}lov\'{e}, Rokitansk\'{e}ho 62,
50003 Hradec Kr\'{a}lov\'{e},
 Czech Republic}

  $^{3}$
Institute of System Science, Durban University of Technology,
Durban, South Africa


\vspace{8mm}



\section*{Abstract}

The phenomenon of a sudden and complete loss of observability
is described in an exactly solvable ${\cal PT}-$symmetric
quantum model.
The collapse is
controlled by a time-dependent toy-model
Hamiltonian and by a time-dependent and
unitarity-guaranteeing Hilbert-space
metric.
At the time $\tau=0$
the process is initiated
in a conventional self-adjoint regime, and
it climaxes in the
degeneracy at $\tau=1$.
The model describes an
$N-$level quantum system where $N < \infty$ so that
both the Hamiltonian and the
metric
are $N$ by $N$ matrices
$H^{(N)}(\tau)$ and $\Theta^{(N)}(\tau)$, respectively.
Moreover, the initial
Hamiltonian $H^{(N)}(0)$ is postulated diagonal and the initial
metric is chosen trivial, $\Theta^{(N)}(0)= I$.
At all times $\tau \in (0,1)$
the metric is kept minimally anisotropic,
with
the evolution towards collapse
characterized by a steady increase of its anisotropy.
At the end of process with $\tau \to 1$
the metric becomes singular
(of rank one) and the end-point Hamiltonian $H^{(N)}(1)$ loses its
diagonalizability (i.e., it acquires
a canonical Jordan-block form).
At any time $\tau$ and dimension $N$, the best
insight in the evolution towards the ultimate quantum catastrophe
is provided by the formulae which give the spectrum of the metric
in closed form.

\section*{Keywords}

unitary quantum solvable toy model;
Hilbert-space anisotropy;
eigenvalues of metric;
exceptional-point collapse;

\newpage

\section{Introduction}

In many standard and routine applications of quantum theory
the evolution in time is
prescribed, in the so called Schr\"{o}dinger representation
\cite{Messiah}, by Schr\"{o}dinger equation
 \be
 {\rm i}\partial_{\tau}\,|\psi({\tau})\pkt
 = \mathfrak{h}\,|\psi({\tau})\pkt
 \label{1}
 \ee
where the state vector belongs to a physical Hilbert space of
conventional textbooks, $|\psi({\tau})\pkt \in {\cal H}^{(T)}$. The
Hamiltonian is usually assumed time-independent and self-adjoint in
${\cal H}^{(T)}$.
Often, the theory is realized in the physical
and, at the same time, user-friendly special
space ${\cal H}^{(T)}=L^2(\mathbb{R}^d)$
of square-integrable coordinate-dependent
functions $\psi(x,{\tau}) = \pbr
x|\psi({\tau})\pkt$ in $d$ dimensions.

Under these assumptions
the evolution is
unitary \cite{Stone} and Eq.~(\ref{1}) is formally solvable,
 \be
 |\psi({\tau})\pkt = e^{-{\rm i} \mathfrak{h} {\tau}}\,|\psi(0)\pkt\,.
 \label{22}
 \ee
The practical construction of the wave functions usually proceeds
via an approximate or exact diagonalization of
$\mathfrak{h}=\mathfrak{h}^\dagger$ \cite{Fluegge}.
The
description of the evolution remains equally routine for the
time-dependent Hamiltonians $\mathfrak{h}=\mathfrak{h}({\tau})$.
One may also move from the primary
Schr\"{o}dinger representation to
its equivalent Heisenberg-representation alternative.
Via a suitable
unitary operator
one then obtains
the Heisenberg-representation wave functions
which are required not to vary with time
\cite{Messiah}.

A more challenging theoretical as well as
conceptual scenario emerges
when the Heisenberg-representation-inspired preconditioning
of the wave-function ket-vector
 \be
 |\psi({\tau})\pkt = \Omega({\tau})\,|\psi\kt\,
   \label{33}
 \ee
is chosen invertible but non-unitary
\cite{Dyson}, i.e., such that
 \be
 \Omega^\dagger({\tau})\Omega({\tau})=\Theta({\tau})\neq I\,.
 \label{2}
 \ee
Then, product $\Theta({\tau})$ can be perceived as
playing the role of a
correct Hilbert-space metric in an
``amended'' physical Hilbert space ${\cal H}^{(A)}$
such that the ket-vectors
$|\psi\kt \in {\cal H}^{(A)}$
are simpler
in comparison with their more conventional
textbook avatars $|\psi\pkt \in {\cal H}^{(T)}$.

In most of the reviews of the state of art
(cf., e.g., \cite{Carl,ali,book})
the authors are trying to cover the
whole new terrain of the theory
on a rather abstract level. At the same time,
the simplification of the picture
caused by the change of paradigm
is usually mentioned just marginally,
as an assumption or a tacit wish rather than
as a rather difficult necessary condition of a
consequent practical
and constructive implementation of the formalism.

In our present paper we intend to pay more attention
to the postulates of solvability and simplification. Indeed,
these features of the quantum models of interest
represent one of the not often emphasized
keys to
applicability of the
whole non-unitary-preconditioning
idea behind the Dyson-inspired mapping (\ref{33}).

The presentation of our considerations and results will start
in section \ref{bm1} where we will summarize a few basic concepts
forming the theory. In subsequent section \ref{bm}
we will point out that in the context of
rigorous mathematics the theory itself is still
in the stage of formal development, characterized by the
existence of
a large number of open mathematical questions and challenges
(cf., e.g., \cite{Dieudonne,Trefethen,Viola}).
In this sense we decided to
circumvent some of these challenges
in the spirit of the words of warning
in reviews \cite{Dieudonne,Geyer}.
Thus,
we will just consider
a family of
sufficiently innocent-looking
benchmark models living in Hilbert spaces
of an arbitrary
{\em finite\,} dimension $N$.

One of the phenomenologically most relevant benefits
of such a choice of models will be discussed in
section \ref{uvodnik}. Its essence will be emphasized to lie in the
possibility of using the well known mathematical
ambiguity and flexibility of the inner-product metric $\Theta$
for the purposes of description of the quantum systems
in an arbitrarily small vicinity
of their singularities
representing certain forms of a
quantum catastrophe.

In section \ref{zavodnik}
we will modify the paradigm and
extend the
dynamical
framework of Schr\"{o}dinger
representation which is inherently stationary.
We will turn attention to a more explicit study of the
time-dependent aspects of our class of benchmark
(and exactly solvable) models.
In this section we will emphasize that
the dominant merit of our models lies in the
closed-form availability of
non-stationary metrics $\Theta=\Theta(\tau)$.

The details of the construction are made explicit
in sections \ref{prevodnik} and \ref{svod}
in which we present a mathematical core of our present message.
A basic
mathematical characteristic of our class of models
will be shown to lie
in the smoothness of the time-dependence of
the inner-product metrics $\Theta=\Theta(\tau)$
and, first of all, in the existence of these operators
for the times covering the whole interval
connecting, in one extreme, the Hermitian quantum mechanics
(characterized by the trivial and fully isotropic metric and
reached, in our units, at $\tau=0$)
with the other extreme of
a ``quantum catastrophic''
{\it alias\,}
``phase-transition'' {\it alias\,} ``fully degenerate''
collapse of the system in the $\tau\to 1$
limit in which the inner product
metric asymptotically and ultimately
degenerates and ceases to exist.

A few concluding remarks will be added in sections \ref{presummary}
and \ref{summary}.

\section{An outline of theory\label{bm1}}

The conceptual consistency of the non-unitary
Dyson's mapping (\ref{33}) is based
on the requirement of equivalence between
the evaluations of the old and new inner product,
 \be
 \pbr \psi_1|\psi_2 \pkt \ (= {\rm product\ in \ } {\cal H}^{(T)})\
 =
 \br \psi_1|\Theta({\tau})|\psi_2 \kt \
 (= {\rm product\ in \ } {\cal H}^{(A)})\,.
 \label{3}
 \ee
The main reason why the non-unitarity
$\Omega({\tau})\neq \Omega^\dagger({\tau})$
in Eq.~(\ref{33}) is
challenging is that the survival of the requirement
of equivalence of physics in
${\cal H}^{(T)}$ and ${\cal H}^{(A)}$
leads to the apparently counterintuitive definition (\ref{3}) of the
inner product in ${\cal H}^{(A)}$. Subsequently, it is fairly
difficult to resist the temptation of introducing another, third,
user-friendlier Hilbert space ${\cal H}^{(F)}$
in place of ${\cal H}^{(A)}$. In it, one re-accepts
the manifestly unphysical but simpler-to-use metric
$\Theta^{(F)}=I$ which only has to be remembered as
mathematically useful and preferable even though
manifestly unphysical.

In the case of the simplest, manifestly time-independent non-unitary
mappings $\Omega$ the trick (\ref{3}) and transition to the
``three-Hilbert-space'' (THS) representation of a given quantum
system proved particularly rewarding in
applications, say, in
nuclear physics \cite{Geyer,Jensen}.
The isospectrality of the mappings
 \be
 \mathfrak{h} \to H=\Omega^{-1}\,\mathfrak{h}\,\Omega
 \label{kasim}
 \ee
of the Hamiltonians
as induced by
Eq.~(\ref{33})
has been used there to facilitate
the practical variational
estimates of the bound state energies of certain
heavy nuclei.

Later on,
the THS formalism also found applications in
relativistic quantum field theory.
Emphasis has been redirected
to the study of systems exhibiting
the parity times time-reversal symmetry
{\it alias\,} ${\cal PT}-$symmetry
of the Hamiltonian
(cf., e.g., the dedicated
reviews \cite{Carl,ali} for an extensive
information and detailed discussion).

Quickly, the growth of the scope of the theory
has been also noticed and accepted in the other parts of physics like,
say,
experimental optics \cite{Christodoulides}.
Various unexpected consequences of the
generalization $\mathfrak{h} \to H$
have been found inspiring, especially when the researchers
managed to keep the evolution-generator
${\cal PT}-$symmetric.
This led to a new perception of Maxwell equations
(in the so called paraxial approximation)
and to the experiments using metamaterials with anomalous
refraction indices \cite{Makris,Makrisb,Makrisc}).

A purely quantum
theoretical as well as phenomenological appeal of the THS
approach reemerges when one opens the Pandora's box of
time-dependent problems \cite{timedep} -- \cite{Mousse}. First of
all, it is necessary to imagine that the second Hilbert space becomes
time-dependent in a way
mediated and carried by the
time-dependence of its
metric $\Theta=\Theta({\tau})$ \cite{NIP,Bishop}.

Thus, one has
${\cal H}^{(A)}={\cal
H}^{(A)}({\tau})$
and one
must replace the time-evolution Schr\"{o}dinger
Eq.~(\ref{1}) valid in ${\cal H}^{(T)}$ by its manifestly
non-Hermitian analogue
 \be
 {\rm i}\partial_{\tau}\,|\psi({\tau})\kt = G({\tau})\,|\psi({\tau})\kt\,
 \label{11}
 \ee
\cite{timedep,NIP}.
This version of evolution equation may still be considered and solved
in the unphysical but user-friendlier Hilbert space
${\cal H}^{(F)}$. In this case it is only necessary to keep
in mind that
the sophisticated non-Hermitian
generator of evolution
 \be
 G({\tau})= H(\tau|-\Sigma(\tau)
 \label{777}
 \ee
must be defined as composed of the observable Hamiltonian
 $$
 H({\tau})=\Omega^{-1}({\tau})\,
\mathfrak{h}({\tau})\,\Omega({\tau})
 $$
and of another operator
 $$\Sigma({\tau})={\rm
i}\,\Omega^{-1}({\tau})
 \,\left [\partial_{\tau}\Omega({\tau})\right ]\,
 $$
called quantum Coriolis force \cite{Mousse}.
Marginally, it may be added that both of these
components of the observable ``instant energy''
Hamiltonian $H(\tau)$
have, in general, complex spectra \cite{ITJP}.

Obviously, a consistent
version of the formalism requires a
cancelation of the non-Hermiticities
as carried by $G({\tau})$ and
$\Sigma({\tau})$.
In a way explained in brief note~\cite{jundemu}
the latter cancelation is still absent
in the systems with trivial
$G({\tau})=0$.
In papers, \cite{jundemu,FrFrith}, incidentally, such an
option
(simplifying
Schr\"{o}dingerian Eq.~(\ref{11})
and
reminding us
of
Heisenberg representation)
has been extended to cover also
the non-vanishing but still
simplified, viz.,
time-independent constant-operator
Schr\"{o}dingerian generators
$G({\tau})=G(0)\neq 0$.
Nevertheless, in the fully general version of the
formalism
one cannot rely upon
similar simplifications.

In particular,
the changes of the
physical inner product in ${\cal H}^{(A)}({\tau})$ need not be slow.
Hence, the Coriolis-force
operator $\Sigma({\tau})$
treated as a difference between $H(\tau)$ and $G(\tau)$
need not be small, either.
This might make the adiabatic approximation more or less
useless \cite{Bila,Bilab}. At the same time, we are
often {\em forced} to assume the validity of the adiabatic
approximation for practical purposes.
This is the situation in which one needs a methodical
guidance mediated, typically, by the exactly solvable examples.
Via their deeper analysis one can identify the dynamical regimes
in which
$\Sigma({\tau})$ can be kept small.

A family of models
characterized by an explicit, closed-form knowledge of
the relevant operators  will be
introduced and described in what follows, therefore.


\section{Benchmark model \label{bm}}

One of the most immediate consequences of
relations (\ref{3}) and (\ref{kasim})
is that the self-adjointness of $\mathfrak{h}(\tau)$
can equivalently be re-expressed
as the metric-dependent
quasi-Hermiticity \cite{Dieudonne} of $H(\tau)$
in ${\cal H}^{(F)}$,
 \be
 H^\dagger(\tau)\Theta(\tau)=\Theta(\tau)
 H(\tau)]\,.
 \ee
An analogous relation is also required to be satisfied
by any other candidate $\Lambda(\tau)$
for an observable.

In the light of
the Dieudonn\'{e}'s critical
analysis \cite{Dieudonne} the authors of review \cite{Geyer}
pointed out and strongly recommended that
all of the eligible candidates
for an observable (including, naturally, also the
energy-representing Hamiltonian)
should be, preferably, represented by operators which are,
in the friendly Hilbert space ${\cal H}^{(F)}$
of mathematical preference,
bounded.

In our present paper we followed the recommendation.

\subsection{Bounded-operator Hamiltonians}

In the context of prevailing model-building practice,
the constraint of boundedness
appeared over-restrictive \cite{Kretsch,Kretschb,Kretschc}.
Indeed, multiple popular elementary
forms of
Hamiltonians have the form of
a differential operator
which is
unbounded \cite{Bijan,Khare}.
In this light
and in a way recalled
at the beginning of section \ref{presummary} below
we decided to accept the constraint and to replace,
in one of our related papers~\cite{maximal},
the most common harmonic-oscillator ordinary-differential
Hamiltonian by its truncated and shifted,
$N-$dimensional diagonal-matrix equidistant-spectrum analogue
  \be
 H^{(N)}_{(LHO)}
 =\left [\begin {array}{cccc}
  -(N-1)&0&\ldots&0\\
 0& -(N-3)&\ddots&\vdots\\
 \vdots&\ddots&\ddots&0
 \\
 0&\ldots&0&+(N-1)
 \end {array}\right ]\,.
 \label{NyTSal}
 \ee
In parallel, the methodical and pedagogical role
of the most popular anharmonic-oscillator-like
Hermiticity-violating interactions \cite{Simon,Swanson}
was transferred to the off-diagonal elements of certain
real and, say, tridiagonal non-Hermitian $N$ by
$N$ multiparametric matrices
  \be
 H^{(N)}_{(AHO)}
 =\left [\begin {array}{cccccc}
  1-N&{}\,{g}_1&0&0&\ldots&0\\
 -{}\,g_1{}& 3-N&{}\,{g}_{2}{}&0&\ldots&0\\
 0&-{}\,g_{2}{}&5-N&\ddots&\ddots&\vdots
 \\
 0&0&-{}\,{g}_{3}{}&\ddots&{}\,{g}_{N-2}{}&0
 \\
 \vdots&\vdots&\ddots&\ddots&N-3&{}\,{g}_{N-1}{}\\
 0&0&\ldots&0&-{}\,g_{N-1}{}&N-1
 \end {array}\right ]\,.
 \label{NyTSbet}
 \ee
This enabled us to simplify the proofs of
the required reality of the bound-state spectra.
For models (\ref{NyTSbet}) we managed to reduce these proofs to
a virtually elementary spectral-continuity or spectral-inertia
argument, applicable at the not too large couplings ${g}_{j}{}$ at
least.

After an additional up-down symmetrization of the above
matrix, i.e., after the choice of
${g}_{N-1}{}={g}_{1}{}$ and ${g}_{N-2}{}={g}_{2}{}$, etc, we arrived
at the final form of our benchmark toy-model Hamiltonian
  \be
 H^{(N)}_{(PT)}{}
 =\left [\begin {array}{cccccc}
  1-N&{}\,{g}_1{}&0&0&\ldots&0\\
 -{}\,g_1{}& 3-N&{}\,{g}_{2}{}&0&\ldots&0\\
 0&-{}\,g_{2}{}&5-N&\ddots&\ddots&\vdots
 \\
 0&0&-{}\,{g}_{3}{}&\ddots&{}\,{g}_{2}{}&0
 \\
 \vdots&\vdots&\ddots&\ddots&N-3&{}\,{g}_{1}{}\\
 0&0&\ldots&0&-{}\,g_{1}{}&N-1
 \end {array}\right ]\,.
 \label{NyTSpt}
 \ee
After such a choice of the class of models we
also managed to parallel the phenomenologically relevant
parity times time reversal symmetry of multiple common
differential-operator toy-model
Hamiltonians by a formally analogous ${\cal
PT}-$symmetry as imposed upon our matrices (\ref{NyTSpt}).
This build-up of analogy merely required
the specification of ${\cal T}$ as the transposition plus sign
reversal. The parity-simulating indefinite square root of the unit
matrix appeared then represented by an antidiagonal $N$ by $N$
matrix ${\cal P}$ with non-vanishing
elements ${\cal P}_{j,N-j+1}=1$,
$j=1,2,\ldots,N$.

In our subsequent paper \cite{tridiagonal} we turned attention to
one of the methodically most welcome features of the ${\cal
PT}-$symmetric and $J=[N/2]-$parametric benchmark Hamiltonian
(\ref{NyTSpt}), viz., to the
availability of the
amazingly elementary geometric form of
boundary $\partial {\cal D}$ of the $J-$dimensional compact
domain ${\cal D}$ of the real parameters ${g}_{j}$ for which the
spectrum of $ H^{(N)}_{(PT)}$ remains real. For our models
(\ref{NyTSpt}) this boundary (or, in the language of physics, the
horizon of the bound-state stability of the system \cite{horizon})
has been shown to acquire, at any matrix dimension $N$, the same
generic geometric form of surface of a smoothly deformed
hypercube with protruded edges and vertices (cf. also
\cite{Siegldis} for more details).

\subsection{Fall in instability}

Initially, the family of our present toy models (\ref{NyTSpt})
has been developed with the purpose of having a tractable
sample of a quantum analogue of a classical
concept of an evolution singularity called,
in the Thom's popular terminology \cite{III,Zeemanb}, a ``catastrophe''.
This aim of study has been made explicit in our paper \cite{I}.
In place of our present
variable $\tau$
measuring the time during the fall of the system into its degenerate
singularity we used a different variable
$\lambda=1-\tau^2$, in terms of which some of the formulae
appeared simpler.

Indeed,
we revealed that  there
exists a certain specific ${\lambda}-$parametrization of the
couplings ${g}_{j}={g}_{j}({{\lambda}})$ in (\ref{NyTSpt}) such that

\begin{itemize}

\item
the two by two matrix $ H^{(2)}_{(PT)}({{\lambda}})$ appears useful
as a benchmark model of an {\em energy-bifurcation} scenario in
which
 \be
 E_0=-\sqrt{{\lambda}}, \ \ E_1= +\sqrt{{\lambda}}\,,
 \label{E5}
 \ee
i.e., in which the spectrum is real iff ${\lambda}\geq \lambda_0=0$
and in which the whole spectrum becomes completely degenerate iff
${\lambda}=0$ while it finally gets purely imaginary iff
${\lambda}<0$;

\item
the three by three matrix $ H^{(3)}_{(PT)}({{\lambda}})$
has been found to serve
as a benchmark model of a new, {\em energy-trifurcation} quantum
catastrophe in which
 \be
 E_0=-2\sqrt{{\lambda}}, \ \ E_1=0, \  \ E_2= +2\sqrt{{\lambda}}\,.
 \label{E6}
 \ee
Again, the spectrum proved completely degenerate iff ${\lambda}=0$.
Up to the exceptional, ${\lambda}-$independent real level
$E_{[N/2]}=0$ emerging at any odd $N$, the rest of the spectrum was
again purely real or imaginary iff ${\lambda}\geq 0$ or
${\lambda}<0$, respectively;

\item
the four by four matrix $ H^{(4)}_{(PT)}({{\lambda}})$ with spectrum
 \be
 E_0=-3\sqrt{{\lambda}}, \ \ E_1= -\sqrt{{\lambda}}, \ \ E_2=
 \sqrt{{\lambda}},\
  \ E_3= 3\sqrt{{\lambda}}
 \,,
 \label{E7}
 \ee
found then an analogous interpretation
of a benchmark quantum model admitting an {\em
energy-quadrifurcation}.

\end{itemize}

 \noindent
Analogous quantum-catastrophic (QC) features
have constructively been guaranteed
to hold, for a special $\lambda-$dependence of
model~(\ref{NyTSpt}), at
any integer $N\geq 2$ (see more details
in section \ref{uvodnik} below).

In what follows these observations
will inspire and enable us to simulate,
at any $N$, the
QC history starting at a conventional Hermitian
$N-$level oscillator Hamiltonian at $\tau=0$.
In the opposite, ``asymptotic'' extreme with $\tau \to
1$ the system will be found to collapse
into a complete (i.e.,
$N-$tuple) energy-level degeneracy.

\section{QC singularity \label{uvodnik}}

We will see below that
during the whole evolution process our model is such that
its nontrivial Hilbert-space
metric $\Theta^{(N)}(\tau)$ can be calculated in closed form.
This will enable us to attribute
the phenomenon of the QC collapse to
a manifest
interplay between the {\it ad hoc\,}
time-dependence of the Hamiltonian
and the equally {\it ad hoc\,}
time-dependence of the
Hilbert-space physical inner-product metric.

\subsection{Parametrization}

One of the simplest forms of the above-mentioned
$\lambda-$parametrizations of the couplings in Hamiltonians
(\ref{NyTSpt}) is given by the formula which was proposed in paper
\cite{tridiagonal},
 \be
 g_{n}^{(PT)}({\lambda})=\sqrt{n(N-n)(1-{\lambda})} 
 \,,\ \ \ \
 \ \ \ \
 n = 1, 2, \ldots, N-1\,.
 \label{paramo}
 \ee
The merit of this parametrization is that in the whole interval of
$\lambda \in (0,1)$ (or formally even of  $\lambda \in (0,\infty)$)
it guarantees the reality of the energy spectrum. The second merit of the
$\lambda-$parametrization~(\ref{paramo}) lies in the fact that the
boundary value of $\lambda=0$ strictly separates the stable dynamical
quantum regime (with $\lambda > 0$ yielding the real, ``observable''
$N-$plet of bound state energies) from the
half-axis of $\lambda<0$
(for which the system ceases to be observable).

In the algebraic terminology of paper \cite{maximal}
and of the older literature \cite{Kato,Heiss,Heissb} one encounters the so
called Kato's exceptional point (EP) of the $N-$th order at $\lambda=0$.
In paper \cite{tridiagonal} we restricted our attention to the
models with small $\lambda$s, therefore. We found that
Eq.~(\ref{paramo}) may be further reparametrized in terms of another
time variable $t$ which
has been found to measure a recovery from the QC singularity
in a broader physical multiparametric domain $ {\cal D}$,
 \be
 \lambda \ \to \ \lambda_n(t) = t+t^2+\ldots+t^{J-1}+G_n t^J\,\ \ \ \ \
  n = 1, 2, \ldots, N\,,\ \ \ \ J =[N/2]
  \,.
 \label{lobkov}
 \ee
The alternative time-parameter $t \in (0, \infty)$
appeared suitable for our having
Hamiltonian $ H^{(N)}_{(PT)}$ more
comfortably tractable
near its EP {\it alias\,} QC singularity.

\subsection{Time-dependent metric}

At small $t$s the evolution can be interpreted as the motion of
system away from QC, towards a stable and less anisotropic
dynamical regime.
Parametrization (\ref{lobkov}) proves useful, first of all, at the
very short times $t \ll 1$ at which it effectively rescales and
magnifies the interior of ${\cal D}$ in the vicinity of EP.
Simultaneously, such an {\em ad hoc} change of scale did not lower
the number of degrees of freedom -- one could still work with as
many as $J$ alternative coupling constants $G_n\geq 0$.

Trivial selection of special couplings $G_n=0$
realizes a one-parametric, simplified but still instructive
reduction of
the picture of dynamics. After the hypothetical start of evolution at
the $\lambda=0$ singularity one ultimately
reaches a manifestly Hermitian
regime at $\lambda=1$. In this sense, parameter
$\lambda=\lambda(t)\in (0,1)$
measures the recovery.

At any fixed value of $N$ and for any suitable Hamiltonian the physics
varies with inner product (\ref{3}).
The reconstruction of all of the eligible metrics $\Theta$ may be
found summarized in our dedicated work~\cite{SIGMAdva}. It
can be summarized as
starting from
Schr\"{o}dinger equation
 \be
  \left [H^{(N)}_{(PT)} \right ]^\dagger |\psi_n^{(N)} \kkt =
   E_n^{(N)} \, |\psi_n^{(N)} \kkt\,,
  \ \ \ \ \ \ n = 0, 1, \ldots, N-1\,
  \label{SEcon}
  \ee
where we replaced
Hamiltonian $H^{(N)}_{(PT)}{}$ by its Hermitian conjugate
in ${\cal H}^{(F)}$.
The complete solution of the new equation opens the way towards the
reconstruction of any metric from its spectral representation
 \be
 \Theta^{(N)}_{(\vec{\kappa})}(t)
 =\sum_{n=1}^N\,|\,\psi_n^{(N)}(t)\kkt\, \kappa_n \,\bbr \psi_n^{(N)}(t)
 |\,.
 \label{alice}
 \ee
All of the parameters $\kappa_n>0$ are freely variable. This
is an ambiguity which reflects the absence
of an exhaustive information about the physics behind the
quantum
system in question (cf., e.g.,
Ref.~\cite{Geyer} for explanation).

In Refs.~\cite{SIGMAdva} and \cite{lotor} we
discussed the general recipe
from a more formal point of view
by which multiple
suitable metrics $\Theta$ may always be assigned to a given
Hamiltonian $H$ via Eq.~(\ref{alice}). We emphasized there that
the construction is always ambiguous.
Now, it is worth adding that the sufficiency of the solution of the
auxiliary conjugate Schr\"{o}dinger equation (\ref{SEcon})
has to be perceived as a another, serendipitious  merit
of the models living in finite dimensions $N < \infty$.

\subsection{Two by two example\label{333}}

Let us pick up $N=2$ and study formula
(\ref{alice}) in more detail. Firstly, let us
change the variables, $t \to r=r(t)=\sqrt{t}>0$, yielding
 \be
 \left [H^{(N)}_{(PT)} \right ]^\dagger
 =\left[ \begin {array}{cc} -1&-\sqrt
{1-{r}^{2}}\\{}\sqrt {1-{r}^{2}}&1\end {array} \right]\,.
 \label{hadve}
 \ee
In terms of the pair of abbreviations $u= \sqrt{1-r}$ and $ v =
\sqrt{1+r}$ we may then calculate the maximal and
minimal eigenvalues $E_+=r$ and $E_-=-r$ of (\ref{hadve}) as well as
the related respective real eigenvectors
 \be
 |\psi_+\kkt = [ \sqrt{1-r},-\sqrt{1+r} ]^T=[u,-v]^T\,,
 \label{eigvcp}
 \ee
 \be
 |\psi_-\kkt = [ \sqrt{1+r},-\sqrt{1-r} ]^T=[v,-u]^T\,
 \label{eigvcm}
 \ee
where the superscript $^T$ denotes transposition.

It remains for us to insert vectors (\ref{eigvcp}) and
(\ref{eigvcm}) in the spectral expansion of the metric. Once we fix
an inessential overall factor and denote $\kappa_+=\sin \alpha$ and
$\kappa_-=\cos \alpha$ with $0 <\alpha < \pi/2$ we get the general
$N=2$ metric-operator matrix 
 \be
 \Theta=\Theta^{(2)}_{[\alpha]}(r^2)=\left[ \begin {array}{cc}
 1+r\cos 2\alpha & -\sqrt{1-r^2}
 \\
{}
 -\sqrt{1-r^2}&1-r\cos 2\alpha
 \end {array} \right]\,.
 \label{inn2}
 \ee
Its elementary form facilitates the direct determination of its
eigenvalues,
 \be
 \theta_{\pm}=1\pm \sqrt{1-r^2 \sin^2 2\alpha}\,.
 \label{difce}
 \ee
One easily verifies that the requirement of the necessary positivity
of these eigenvalues is trivially satisfied at any square-root-time
$r=\sqrt{t}$ such that $0 < r < 1$.

We may conclude that the standard probabilistic interpretation of
our time-dependent $N=2$ THS QC quantum model is determined not only
by its one-parametric Hamiltonian (\ref{hadve}) but also by the
specification of the concrete value of variable $\alpha$. Via
Eq.~(\ref{inn2}) this choice selects one of the eligible inner
products (\ref{3}). This makes the Hilbert space of states fully
defined and unique, with an asymmetry {\it alias\,}
anisotropy of its geometry
measurable simply by the difference $1-r^2 \sin^2 2\alpha$.


\section{Anisotropy\label{zavodnik}}

Any energy-representing input Hamiltonian $H(\tau)$
admits many alternative,
nonequivalent predictions
of the results of measurements \cite{Geyer}.
Formally, this is caused by the existence of
multiple Hamiltonian-independent
parameters as sampled by $\alpha$ in Eq.~(\ref{inn2})
at $N=2$. The ambiguity must
be removed via additional, physics-based constraints.

The situation becomes slightly different
near the QC degeneracy because the EP-related
confluence of the
energy levels already implies
that a consistent picture of reality
and, in particular,
the necessary regular
inner-product metric ceases to be positive definite.

A universal remedy does not exist because
a small increase/decrease of time may cause a large
change of
the metric in general.
In what follows we will exclude
such a formally admissible non-perturbative
behavior of $\Theta(\tau)$ as unphysical.

\subsection{Metrics as functions of time $\tau$}

In our older paper \cite{I}
we constructed the well-behaved, extrapolation-friendly
metrics up to $N=3$ or, in an implicit form
(\ref{alice}), up to $N=5$. The choice
of parameters $\vec{\kappa}$ was
dictated by a half-intuitive requirement of
simplicity.
Vague as such a recipe might have been,
it found an independent support in a similarity of formulae
at several Hilbert-space dimensions.
The success of such a
choice of parameters
motivated also our present study.

Our present strategy will
be based on a change of philosophy.
In place of studying just
a vicinity of QC using a small EP-unfolding time $t$
({\it alias\,} $\tau\lessapprox 1$)
we will
search for an explicit description of
the QC-emergence process
in the whole interval of $\tau \in (0,1)$.
Thus, the time
variable ${\tau}$ will replace the not too
suitable time-like
parameter $\lambda(t)$ of Eq.~(\ref{lobkov})).
Running in opposite direction, i.e., from the
$\tau=0$ instant (at which our Hamiltonian is diagonal)
to the QC limit of $\tau \to 1$
(in which our Hamiltonian becomes
non-diagonalizable and merely Jordan-block
representable).


The choice of the new ``time of
collapse'' variable ${\tau}=\sqrt{1-{\lambda}}$
will enable us to treat the
fall of our system into its singularity as a process which started
long before the catastrophe.
In the final QC limit $\tau \to 1$ the
time-dependent vectors $|\psi_n\kkt$
of Eq.~(\ref{SEcon})
can be then perceived as getting mutually parallel.
The operator (\ref{alice}) itself degenerates to a
singular but
particularly elementary matrix of rank one of course.

A disadvantage
of the replacement of $t$ or $\lambda(t)$
by $\tau$ might be
that
the construction of the metric appeared easier at small
$\lambda$.
Fortunately, near the
opposite extreme of $\tau=0$ the transition $t \to \tau$ simplifies
the Hamiltonians themselves,
 \ben
 H^{(2)}({\tau}) = \left [\begin {array}{cc}
 -1&{\tau}\\{}-{\tau}&1\end {array}\right ]
 \,,\ \ \ \
 H^{(3)}({\tau}) = \left [\begin {array}{ccc}
  -2&\sqrt{2}\,{\tau}  &0  \\
 -\sqrt{2}\,{\tau}&0    &\sqrt{2}\,{\tau}\\
  0&-\sqrt{2}\,{\tau}&2
 \end {array}\right ]
 \,,\ \ \ \
 \een
 \be
 \ \ \ \ \
 H^{(4)}({\tau}) =
 \left [\begin {array}{cccc}
  -3&\sqrt{3}\,{\tau}   &0  &0\\
 -\sqrt{3}\,{\tau}&-1   &2\,{\tau}  &0\\
  0&-2\,{\tau}  &1 &\sqrt{3}\,{\tau}\\
  0&0&-\sqrt{3}\,{\tau}&3
 \end {array}\right ]\,,\ \ldots\,.
 \label{radaham}
 \ee
One is led to a replacement of
expansion (\ref{alice}) by a less subtle, brute-force
technique, rendered possible by the tridiagonal-matrix
form of Hamiltonian.

Such an idea proved
productive, indeed.

\begin{thm}
The metrics $\Theta^{(N)}({\tau})$ compatible with
Hamiltonians (\ref{radaham}) may be sought in the finite-sum form
 \be
 \Theta^{(N)}({\tau})=
 \sum_{j=1}^{N}\,(-{\tau})^{j-1} {\cal M}^{(N)}(j)\,
 \label{hlavni}
 \ee
with sparse-matrix coefficients
 \be
 {\cal M}^{(N)}(1)= \left[
 \begin {array}{cccc}
  \alpha_{11}(1)&0&\ldots&0
 \\
 {}
 0&\alpha_{12}(1)&\ddots&\vdots
 \\
 {}\vdots&\ddots&\ddots&0
 \\
 {}0&\ldots&0&\alpha_{1N}(1)
 \end {array} \right]\,,
 \label{hlavni1}
 \ee
 \be
 {\cal M}^{(N)}(2)= \left[
 \begin {array}{cccccc}
  0&\alpha_{11}(2)&0&\ldots&\ldots&0
 \\
 {}
 \alpha_{21}(2)&0&\alpha_{12}(2)&0&\ldots&0
 \\{}
 0&\alpha_{22}(2)&0&\alpha_{13}(2)&\ddots&\vdots
 \\{}
 \vdots&\ddots&\ddots&\ddots&\ddots&0
 \\{}
 0&\ldots&0&\alpha_{2,N-2}(2)&0&\alpha_{1,N-1}(2)
 \\
 0&\ldots&{}\ldots&0&\alpha_{2,N-1}(2)&0
 \end {array} \right]\,,
 \ \ \ \ \
 \label{hlavni2}
  \ee
 \be
  \ \ \ \ \
 {\cal M}^{(N)}(3)= \left[
 \begin {array}{ccccccc}
  0&0&\alpha_{11}(3)&0&\ldots&\ldots&0
 \\
 {}
 0&\alpha_{21}(3)&0&\alpha_{12}(3)&0&\ldots&0
 \\{}
 \alpha_{31}(3)&0&\alpha_{22}(3)&0&\alpha_{13}(3)&\ddots&\vdots
 \\{}
 0&\alpha_{32}(3)&\ddots&\ddots&\ddots&\ddots&0
 \\{}
 0&\ddots&\ddots&0&\alpha_{2,N-3}(3)&0&\alpha_{1,N-2}(3)
 \\{}
 \vdots&\ldots&0&\alpha_{3,N-3}(3)&0&\alpha_{2,N-2}(3)&0
 \\
 0&\ldots&{}\ldots&0&\alpha_{3,N-2}(3)&0&0
 \end {array} \right]\,,
 \label{hlavni3}
   \ee
etc.

\end{thm}

\begin{proof}
follows  from the observation that
at any $k=1,2,\ldots,N$,
the set of all of the non-vanishing elements of
matrix ${\cal M}^{(N)}(k)$ may be compressed and
arranged into an auxiliary, $k$ by
$(N-k+1)-$dimensional array
 \be
  \alpha{(k)}= \left[ \begin {array}{ccccc}
   \alpha_{11}(k)&\alpha_{12}(k)&\alpha_{13}(k)&\ldots&\alpha_{1,N-k+1}(k)
   \\
   \alpha_{21}(k)&\alpha_{22}(k)&\alpha_{23}(k)&\ldots&\alpha_{2,N-k+1}(k)
 \\
   \vdots& \vdots& \vdots& & \vdots
   \\
   \alpha_{k1}(k)&\alpha_{k2}(k)&\alpha_{k3}(k)&\ldots&\alpha_{k,N-k+1}(k)
 \end {array} \right]\,.
 \label{referendu}
 \ee
with $\alpha_{11}(k)={\cal
M}^{(N)}_{1k}(k)$, etc. This reduces the proof to the
inspection of the set of $N^2$ Dieudonn\'es
linear algebraic compatibility relations written in the matrix form
 \be
 H^\dagger \Theta=\Theta\,H
 \label{dieudd}
 \ee
not all of which are independent (cf. Ref. \cite{SIGMAdva} for
details).
\end{proof}

\subsection{Onset of the process of degeneracy}

From the point of view of potential applications of formula
(\ref{hlavni}) it is important that at the beginning of the fall
into instability (i.e., at the far-from-QC instant ${\tau}=0$) the
Hamiltonians (\ref{radaham}) will all coincide with the respective
truncated and diagonal (i.e., Hermitian) harmonic-oscillator-like
matrices. In this picture the spectrum of energies
$E_n^{(N)}({\tau})$ will remain real but shrinking with the growth
of the innovated time ${\tau}$. In the limit ${\tau} \to 1$, i.e.,
at the very end of the fall of the system into QC singularity, the
spectrum becomes completely degenerate, $E_n^{(N)}(1)=0$,
$n=0,1,,\ldots,N-1$.

In the latter limit the Hamiltonian
(i.e., matrix $H^{(N)}(1)$) ceases to be
diagonalizable and loses its standard physical tractability and
interpretation. {\em Vice versa}, from the point of view of physics
the description of the evolution as generated by $H^{(N)}(\tau)$
will change at $\tau=1$, requiring an introduction of some new
degrees of freedom beyond this instant, i.e., at $\tau>1$. The study
of such a discontinuous
switch to a new form of Hamiltonian at later times lies
already beyond the scope of present paper:
Interested readers might consult, e.g., a dedicated study \cite{passage}.
Beyond the framework of quantum theory,
examples of such an EP-mediated phase transition
may be found, e.g., in magnetohydrodynamics~\cite{suwem}.

Temporarily, let us now return, in the context of interpretations,
to the times of recovery $t$ or $\lambda(t)$. In these variables,
the motion beyond the end-of-the-interval $\lambda(t)=1$ appears
much less exotic. Typically, the energies would not feel the change
at all (cf., e.g., Eqs.~(\ref{E5}), (\ref{E6}) or (\ref{E7})).
Still, the values
of $\lambda =\lambda^{(outer)}>1$ remain {\em mathematically}
less interesting because in the ``outer''  interval
of parameters our
Hamiltonian matrices become
Hermitian.
Thus,
it is natural to require that in
the extrapolated dynamical regime the
metric remains constant and trivial, $\Theta^{(outer)} \equiv
I$. In other words, we propose to match the
non-Hermitian and Hermitian dynamical regimes strictly at
$\lambda(t)=1$, with
 \be
  \lim_{\tau \to 0}\ \Theta^{(N)}({\tau}) = I \,.
  \label{restr}
  \ee
One of the consequences of such a discontinuation of the model
may be seen in the subsequent most natural reinterpretation
of the point of matching $\lambda(t)=1$: The process of the
QC degeneracy
is expected to proceed only at $\lambda(t)<1$.
The optimal
QC-related metrics should be then required  continuous
{\em just at the relevant times\,} $\tau \in (0,1)$, i.e.,
in particular,
up to
the very instant
of the EP degeneracy.

\subsection{Example: $N=2$}

A deeper phenomenological meaning of  requirement (\ref{restr})
may be most immediately illustrated via the
$N=2$ model. Recalling the set of {\em all\,} metrics
(\ref{inn2}) we notice that they are
numbered by the single optional real
variable $\alpha$.
It is easy to see that each deviation of this
parameter from its unique,
anisotropy-minimizing and
extrapolation-friendly
value as deduced from formula
(\ref{difce}) would necessarily violate
constraint (\ref{restr}).

The existence of the closed formula for the spectrum
enables us to see that the
influence of $\alpha$ changes from very weak (in the QC vicinity,
i.e., at $t \ll 1$) to very strong (near the Hermitian dynamical
regime where $\lambda(t)=1$). The extrapolation-friendly
choice of $\alpha = \pi/4$
of  papers \cite{I} or \cite{lotor} appears exceptional.
{\em Solely} for this choice of the
free parameter the difference between the eigenvalues of the metric
will strictly vanish at $\lambda(t)=1$. We may just repeat that for
the value of $\alpha = \pi/4$ the $\tau=0$ instant really carries
the meaning of a {Hermitian} onset of the fall into QC singularity.

Once we preserve the latter exceptional choice of the parameter
$\alpha = \pi/4$ at all times $\tau \in (0,1)$, the difference between
the two eigenvalues of the $N=2$ metric (measuring a Hilbert-space
anisotropy) remains always minimized (cf.
Eq.~(\ref{difce})). In this formulation, the selection of a
``minimal Hilbert-space anisotropy'' principle
of paper \cite{lotor} is certainly confirmed as
optimal.



\section{Eigenvalues of the metrics\label{prevodnik}}

In the spirit of the methodical
project as outlined in paper \cite{lotor} the uniqueness of the
choice of $\alpha = \pi/4$ at $N=2$ should be, {\em mutatis mutandis},
extended and amended
to apply at any $N$. Preliminarily,
such an idea has been tested and found feasible
in Ref.~\cite{I} where we recalled
formula~(\ref{alice})
and where we managed to
evaluate,
up to $N=5$, the
ketket-eigenvectors $|\psi_n^{(N)} \kkt\, $
in closed form.

Now we intend to amend the
recipe and to find and formulate a more general result.
Our task may be separated into two subtasks.
In the first one
(to be dealt with in this section)
the problem
will be reconsidered
at a few
smallest dimensions $N$.
We will reveal that
a new and promising
guide to extrapolations in $N$
can and should be sought in a
certain very regular sparse-matrix pattern
emerging in the
formulae for the metrics $\Theta^{(N)}(\tau)$
[cf. also
Eq. Nr. (10) in \cite{I} in this respect].

Secondly,
in a genuine climax of our present constructive efforts
(and in a way described in subsequent section \ref{svod})
we will find
that the latter observation opens the way towards
a remarkably efficient study and closed-form evaluation
of the eigenvalues
$\theta^{(N)}_n(\tau)$
of the metrics.
Very well reflecting
both the anisotropy and asymptotic degeneracy of the
physical Hilbert space
and, hence, carrying a perceivably more useful
information about dynamics
than the
matrix elements of the metric-operator
$N$ by $N$ matrices $\Theta^{(N)}(\tau)$ themselves.

\subsection{$N=2$, revisited \label{revi2}}

The explicit construction of metric
$\Theta^{(N)}({\tau})$
via auxiliary Schr\"{o}dinger Eq.~(\ref{SEcon}) is
not too easy even at $N=2$, i.e., for our first nontrivial
QC-related Hamiltonian matrix
 \be
 H^{(2)}({\tau})=\left[ \begin {array}{cc}
  -1&{\tau}\\-{\tau}&1\end {array}
 \right]\,.
 \label{neher2}
 \ee
The efficiency of this construction remains comparable with
the brute-force solution of Eq.~(\ref{dieudd})
(cf.
subsection \ref{33} above). Nevertheless,
it still makes sense to
rederive metric
$\Theta^{(2)}$, by the amended method,
for the pedagogical purposes.

We may start from the real-matrix ansatz
 \be
 \Theta^{(2)}_{(\vec{\kappa})}({\tau})=\left[ \begin {array}{cc}
  a&b\\b&d\end {array}
 \right]\,
 \label{neher2}
 \ee
with the subscripted vector $\vec{\kappa}$
containing two arbitrary positive components.
Next we fix an overall multiplication constant by
setting the determinant equal to one. This enables us to put
$b=\sinh \nu$ and choose $\varepsilon =\pm 1$ in $a =
\varepsilon\,\cosh \nu\,\exp \varrho$ and $d = \varepsilon\,\cosh
\nu\,\exp (-\varrho)$.

Both of the new parameters $\nu$ and $\varrho$
are assumed real. The metric must be positive so that we may
only use $\varepsilon = 1$. Finally we check that the matrix
constraint (\ref{dieudd}) degenerates to the single,
time-reparametrization item
 \be
  {\tau}= -\frac{\tanh \nu}{\cosh \varrho}\,.
 \label{bla}
 \ee
Our conclusion is that for any given ${\tau} \in (0,1)$ we may
choose {\em any} real $\varrho \in (0,\varrho_{max})$ (note that
this is the parameter which makes the main diagonal of the metric
asymmetric).

This choice enables us to evaluate
$\nu=\nu({\tau},\varrho)$ from the latter equation (this implies
that at a fixed time, the value of $\varrho_{max}$ must be such that
$\cosh \varrho_{max} =1/{\tau}$). Summarizing, we may set
$\alpha_{11}(1)=\cosh \nu \exp \varrho$, $\alpha_{12}(1)=\cosh \nu
\exp (- \varrho)$ and $\alpha_{11}(2)=\sinh \nu$ in
Eq.~(\ref{hlavni}) at $N=2$.
The resulting eigenvalues of the metric
 \be
 \theta_\pm = \cosh \nu \cosh \varrho
  \pm \sqrt{\cosh^2 \nu \cosh^2 \varrho -1}\,
  \label{onen}
  \ee
are both, by construction, positive.

At the very start of the fall of the system into the catastrophe,
i.e., at ${\tau}=0$ one has $ \varrho_{max}(0) =\infty$ so that
there is no upper bound imposed upon $\varrho(0)$. Still, as long as
one might like to have the trivial, isotropic initial value of
$\Theta^{(2)}(0)\sim I$ (implying the special choice of $\nu(0)=0$
and $\varrho(0)=0$) the resulting metric becomes, up to the
above-mentioned irrelevant overall multiplication factor, unique at
$\tau=0$.

During the subsequent growth of $\tau <1$ the requirement of the
minimization of the anisotropy leads to the rule $\varrho(\tau)=0$
(cf. Eq.~(\ref{onen})) so that the remaining variable $\nu<0$ may
now be interpreted as another version of the time of the QC
degeneracy which is just rescaled and, incidentally,
inverted (cf. Eq.~(\ref{bla})).

Once we return to the standard variables we get our unique and
minimally anisotropic metric in the virtually trivial form
 \ben
 \Theta^{(2)}=\left[ \begin {array}{cc}
 1 & -{\tau}
 \\
{}
 -{\tau}&1
 \end {array} \right]=I-{\tau}\,J\,.
 \een
From this formula we may deduce the special, minimally anisotropic
version of eigenvalues in the form compatible with their more strongly
anisotropic generalization (\ref{difce}).

\subsection{$N=3$}

Whenever one tries to move to the higher matrix dimensions $N$ one
encounters the technical problem of an increase of multitude of
parameters. In the first nontrivial case with $N=3$
let us first follow the $N=2$ guidance (cf. the ultimate choice of
$\varrho = 0$ in the preceding paragraph \ref{revi2}) and let us
omit the discussion of the metrics with an asymmetric form of their
main diagonal.

Once we also keep ignoring the other, irrelevant though still
existing overall factor, we are, after some straightforward
manipulations using Eq.~(\ref{dieudd}), left with the last free
parameter $g$ in the metric
 \be
  \Theta^{(3)}({\tau})= \left[ \begin {array}{ccc} 1&-\sqrt {2}g{\tau}&g{{\tau}}^{2}
 \\\noalign{\medskip}-\sqrt {2}g{\tau}&2\,g-1+g{{\tau}}^{2}&-\sqrt {2}g{\tau}
 \\\noalign{\medskip}g{{\tau}}^{2}&-\sqrt {2}g{\tau}&1\end {array} \right]\,.
 \label{refere}
 \ee
Among its three readily obtainable eigenvalues
 \be
 \theta_1=g{{\tau}}^{2}+g-\sqrt {4\,{g}^{2}{{\tau}}^{2}+{g}^{2}-2\,g+1}
 \,,
 \ \ \ \ \
 \theta_2=1-g{{\tau}}^{2}\,,
 \ \ \ \ \
 \theta_3=g{{\tau}}^{2}+g+\sqrt {4\,{g}^{2}{{\tau}}^{2}+{g}^{2}-2\,g+1}
 \label{difere}
 \ee
the middle one (with an inverted-parabola
dependence on ${\tau}$) remains
positive for the parameters
  $g<1/{\tau}^2$.

The change of sign of the remaining two
eigenvalues takes place at
the curves $g=1/{\tau}^2$ and $g=1/(2-{\tau}^2)$ in the $g-{\tau}$
plane. As a consequence,
the correct and unique choice of the parameter is $g=1$,
yielding again the unique metric
 \be
 \Theta^{(3)}=I  - {\tau}\,\left[ \begin {array}{ccc}
 0 & \sqrt{2}&0
 \\
{}
 \sqrt{2}&0&\sqrt{2}
 \\
{}
 0&\sqrt{2}&0
 \end {array} \right]+ {\tau}^2 J\,
 \label{formu3}
  \ee
with the expected $\tau-$dependence of the eigenvalues
as given by Eq.~(\ref{difere}).

\subsection{$N=4$}

In the next step of our constructive considerations we are getting
beyond the formulae derived in older papers.
We succeeded because already the
$N=3$ formula (\ref{formu3}) offered the hint.
Thus, making use of the analogy and performing an
extrapolation, it just proved sufficient to verify that the
following tentative candidate for the metric
 \be
 \Theta^{(4)}=\left[ \begin {array}{cccc} 1&-\sqrt {3}{\tau}&\sqrt
 {3}{{\tau}}^{2}&-{{\tau}}^{3}\\{}-\sqrt
 {3}{\tau}&1+2\,{{\tau}}^{2}&-2\,{\tau}-{{\tau}}^{3}&\sqrt {3}{{\tau}
 }^{2}\\{}\sqrt
 {3}{{\tau}}^{2}&-2\,{\tau}-{{\tau}}^{3}&1+2\,{{\tau}}^{2}&- \sqrt
 {3}{\tau}\\{}-{{\tau}}^{3}&\sqrt {3}{{\tau}}^{2}&-\sqrt {3}{\tau}&1
 \end {array} \right]
 \label{refere4}
 \ee
obeys all the necessary and sufficient requirements.
They include
the validity of the Dieudonn\'e's Eq.~(\ref{dieudd}) as well as the
feasibility of evaluation of the
$\tau-$dependent eigenvalues of the candidate for the metric.
We immediately see that they behave as they should,
 \be
  \left\{\theta_1,\ldots,\theta_4
  \right\}=
 \left\{ 1-3\,{\tau}+3\,{{\tau}}^{2}-{{\tau}}^{3},
  1-{\tau}-{{\tau}}^{2}+{{\tau}}^{3},1+3\,{\tau}+3\,{{\tau}}^{
 2}+{{\tau}}^{3},1+{\tau}-{{\tau}}^{2}-{{\tau}}^{3} \right\}.
 \label{difere4}
 \ee
Their
correct QC behaviour at $\tau = 1$ really deserves an explicit
graphical display as provided by Figure~\ref{bigthee2}.

%

\begin{figure}[t]                     
\begin{center}                         
\epsfig{file=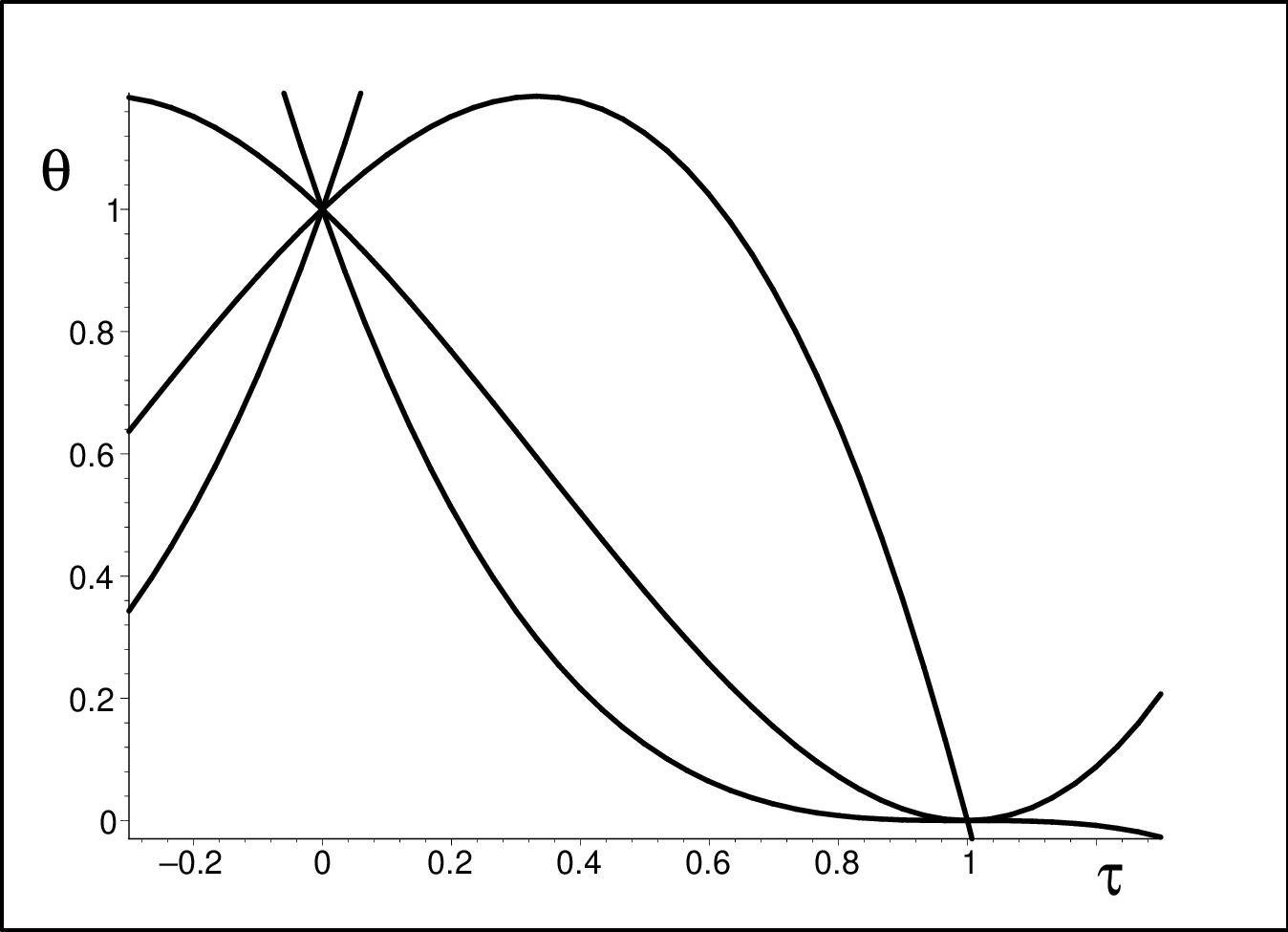,angle=270,width=0.5\textwidth}
\end{center}                         
\vspace{-2mm}\caption{Eigenvalues (\ref{difere4}) of the 
$N=4$ metric (\ref{refere4})
as functions of time 
$\tau$.}
 \label{bigthee2}
\end{figure}
%

\section{Extrapolations \label{svod}}

\subsection{Metrics between $N=5$ and
$N=7$}


We may now combine the results of preceding section with the
contents of Theorem 1. Using an elementary insertion in
Eq.~(\ref{dieudd}) we may easily prove that in the expansions
(\ref{hlavni}) of the metrics with minimal anisotropy the
diagonal-matrix coefficients (\ref{hlavni1}) may be defined, at all
$N$, by the elementary formula
 $$
 \alpha_{1n}(1)=1\,,\ \ \ \ n = 1, 2,
 \ldots, N\,.
 $$
%
%
%
%
Similarly, the closed formula is also available for the antidiagonal
coefficients in ${\cal M}^{(N)}(N)$,
 $$
 \alpha_{n1}(N)=1\,,\ \ \ \ n = 1, 2,
 \ldots, N\,.
 $$
Next, the bidiagonal matrix coefficients (\ref{hlavni2}) may be
defined, at all $N$, by the slightly less elementary general formula
 $$
 \alpha_{1n}(2)=\alpha_{2n}(2) = \sqrt{n(N-n)}\,,\ \ \ \ n = 1, 2,
 \ldots, N-1\,.
 $$
Due to the easily verified symmetry, the analogous formula exists
for the coefficients in ${\cal M}^{(N)}(N-1)$,
 $$
 \alpha_{n1}(N-1)=\alpha_{n2}(N-1) = \sqrt{n(N-n)}\,,\ \ \ \ n = 1, 2,
 \ldots, N-1\,.
 $$
Up to now, unfortunately, we did not succeed in an extension of
these observations to the tridiagonal sparse matrix coefficients
(\ref{hlavni3}), etc. Nevertheless, we believe that the task is not
impossible. This belief seems supported by Theorem 1, i.e., by the
reducibility of the $N$ by $N$ matrices ${\cal M}^{(N)}(k)$ with
$k=3,4,\ldots$ to the respective auxiliary $ k$ by $N-k+1$ arrays
containing the non-vanishing matrix elements $\alpha_{jm}(k)$ of
${\cal M}^{(N)}(k)$.

The first missing set of coefficients occurs at $N=5$. Its values
 \ben
 \alpha_{11}(3)=\alpha_{13}(3) = \alpha_{31}(3)=\alpha_{33}(3) =
 \sqrt{6}\,,
 \een
 \ben
  \alpha_{12}(3)=\alpha_{21}(3) =
 \alpha_{23}(3)=\alpha_{32}(3) = 3\,,\ \ \ \ \alpha_{22}(3)=4\,.
 \een
should be better rewritten in the compact form of array
 \be
  \alpha{(3)}= \left[ \begin {array}{ccc}
   \sqrt{6}&3&\sqrt{6}
 \\
 3&4&3
 \\
 \sqrt{6}&3&\sqrt{6}
 \end {array} \right]\,,\ \ \ \ \
 N=5\,.
 \label{referenc}
 \ee
It makes sense to complemented this result by the next, $N=6$
formula
 \ben
 \alpha_{11}(3)=\alpha_{14}(3) = \alpha_{31}(3)=\alpha_{34}(3) =
 \sqrt{10}\,,\ \ \ \alpha_{21}(3)= \alpha_{24}(3)=4\,,
  \een
   \ben
 \alpha_{12}(3)=\alpha_{13}(3) = \alpha_{32}(3)=\alpha_{33}(3) =
 3\,\sqrt{2}\,,\ \ \ \ \
 \alpha_{22}(3)=
 \alpha_{23}(3)=6 \,
 \een
which we derived using the brute force construction based on
Eq.~(\ref{dieudd}). It again deserves the compact presentation as
array
 \be
  \alpha{(3)}= \left[ \begin {array}{cccc}
   \sqrt{10}&3\sqrt{2}&3\sqrt{2}&\sqrt{10}
 \\
 4&6&6&4
 \\
   \sqrt{10}&3\sqrt{2}&3\sqrt{2}&\sqrt{10}
 \end {array} \right]\,,\ \ \ \ \
 N=6\,.
 \label{referendu}
 \ee
The closed form of the latter result indicates that there might
exist a not too complicated extrapolation recipe, with the help of
which we would be able to determine the unique, minimally
anisotropic metric at any dimension $N$. This belief seems further
supported by the regularity and apparent extrapolation-friendliness
of the next two sparse-matrix ``missing'' coefficients
 $$
 {\cal M}^{(7)}(3)=\left[ \begin {array}{ccccccc} 0&0&\sqrt {15}&0&0&0&0
 \\0&5&0&
 \sqrt {30}&0&0&0
 \\\sqrt {15}&0&8&0&6&0&0\\0&
 \sqrt {30}&0&9&0& \sqrt {30}&0
 \\0&0&6&0&8&0&\sqrt {15}\\0&0&0&
 \sqrt {30}&0&5&0\\0&0&0&0&\sqrt {15 }&0&0\end
{array} \right]
 $$
and 
 $$
 {\cal M}^{(7)}(4)=
 \left[ \begin {array}{ccccccc} 0&0&0&2\,\sqrt {5}&0&0&0\\
 {}0&0&2\,\sqrt {10}&0&2\,\sqrt {10}&0
 &0\\{}0&2\,\sqrt {10}&0&6\,\sqrt {3}&0&2\,
 \sqrt {10}&0\\{}2\,\sqrt {5}&0&6\,
 \sqrt{3}&0& 6\,\sqrt {3}&0&2\,\sqrt {5}\\{}0&2\,
 \sqrt {10} &0&6\,\sqrt {3}&0&2\,\sqrt {10}&0\\{}0&0&2\,
  \sqrt {10}&0&2\,\sqrt {10}&0&0\\{}0&0&0 &2\,\sqrt {5}&0&0&0
  \end {array} \right]\,.
 $$
They were obtained, with the assistance of the computerized
symbolic manipulations, by the brute-force solution of the set of 49
linear algebraic Eqs.~(\ref{dieudd}).

\subsection{Eigenvalues at arbitrary $N$}

\begin{table}[th]
\caption{Pascal triangle for coefficients $C_{1n}^{(N)}$ in
Eq.~(\ref{nejdul})
 } \label{pexp3b}
\begin{center}
\begin{tabular}{||c|ccccccccccccccc||}
\hline \hline
  N &
  \multicolumn{15}{c||}{ }\\
 \hline \hline
 1&&&&&&&&1&&&&&&&\\
 2&&&&&&&1&&{1}&&&&&&\\
 3&&&&{}&&1&&2&&1&&&&&\\
 4&{}&&&&1&&3&&{3}&&{1}&&{}&&\\
 5&&&&1&&4&&6&&4&&1&&{}&\\
 6&&&1&&5&&10&&{10}&&{5}&&{1}&&{}\\
 7&&1&&6&&15&&20&&15&&6&&1&\\
 8&1&&7&&21&&35&&{35}&
 &{21}&&{7}&&{1}\\
 $\vdots$&&&& & & &
  & & $\ldots$& & &&& &\\
  \hline \hline
\end{tabular}
\end{center}
\end{table}


\begin{table}[th]
\caption{Pascal-like triangle for coefficients $C_{2n}^{(N)}$  in
Eq.~(\ref{nejdul})} \label{pexp3a}
\begin{center}
\begin{tabular}{||c|ccccccccccccccc||}
\hline \hline
  N &
  \multicolumn{15}{c||}{ }\\
 \hline \hline
 2&&&&&&&1&&{-1}&&&&&&\\
 3&&&&{}&&1&&0&&-1&&&&&\\
 4&{}&&&&1&&1&&{-1}&&{-1}&&{}&&\\
 5&&&&1&&2&&0&&-2&&-1&&{}&\\
 6&&&1&&3&&2&&{-2}&&{-3}&&{-1}&&{}\\
 7&&1&&4&&5&&0&&-5&&-4&&-1&\\
 8&1&&5&&9&&5&&{-5}&
 &{-9}&&{-5}&&{-1}\\
 $\vdots$&&&& & & &
  & & $\ldots$& & &&& &\\
  \hline \hline
\end{tabular}
\end{center}
\end{table}


\begin{table}[th]
\caption{Pascal-like triangle for coefficients $C_{3n}^{(N)}$  in
Eq.~(\ref{nejdul})} \label{pexp3c}
\begin{center}
\begin{tabular}{||c|ccccccccccccccc||}
\hline \hline
  N &
  \multicolumn{15}{c||}{ }\\
 \hline \hline
  3&&&&{}&&1&&-2&&1&&&&&\\
 4&{}&&&&1&&-1&&{-1}&&{1}&&{}&&\\
 5&&&&1&&0&&-2&&0&&1&&{}&\\
 6&&&1&&1&&-2&&{-2}&&{1}&&{1}&&{}\\
 7&&1&&2&&-1&&-4&&-1&&2&&1&\\
 8&1&&3&&1&&-5&&{-5}&
 &{1}&&{3}&&{1}\\
 $\vdots$&&&& & & &
  & & $\ldots$& & &&& &\\
  \hline \hline
\end{tabular}
\end{center}
\end{table}


\begin{table}[th]
\caption{Pascal-like triangle for coefficients $C_{4n}^{(N)}$ in
Eq.~(\ref{nejdul}) } \label{pexp3d}
\begin{center}
\begin{tabular}{||c|ccccccccccccccc||}
\hline \hline
  N &
  \multicolumn{15}{c||}{ }\\
 \hline \hline
   4&{}&&&&1&&-3&&{3}&&{-1}&&{}&&\\
 5&&&&1&&-2&&0&&2&&-1&&{}&\\
 6&&&1&&-1&&-2&&{2}&&{1}&&{-1}&&{}\\
 7&&1&&0&&-3&&0&&3&&0&&-1&\\
 8&1&&1&&-3&&-3&&{3}&
 &{3}&&{-1}&&{-1}\\
 $\vdots$&&&& & & &
  & & $\ldots$& & &&& &\\
  \hline \hline
\end{tabular}
\end{center}
\end{table}

 \noindent
In the above-described constructions of the $N$ by $N$ matrices of
metric $\Theta^{(N)}$ we did not manage to find, unfortunately, any
obvious general extrapolation tendency or pattern. For this reason,
we turned attention from matrices to the perceivably
simpler-to-display $N-$plets of their eigenvalues
$\theta_n^{(N)}({\tau})$. At the first few values of $N$
we performed the brute-force calculations.
We met the ultimate success which may be given the following form
of proposition.

\begin{prop}
The time-dependent {eigenvalues} of $\Theta^{(N)}(\tau)$ may be written in
the form of polynomials
 \be
 \theta_n^{(N)}({\tau})
 =\sum_{k=1}^N\,C^{(N)}_{nk}\,{\tau}^{k-1}\,
 \label{nejdul}
 \ee
where the numerically evaluated values of the
coefficients $C^{(N)}_{nk}$ may be found listed,
up to $N=8$, in the
Pascal-like schemes of Tables \ref{pexp3b} - \ref{pexp3d}.
\end{prop}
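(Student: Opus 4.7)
The plan is to identify the Hamiltonian family with an $\mathfrak{su}(2)$ structure, derive the metric in closed form via a similarity transformation, and read off the eigenvalues from the diagonal form. The first step is to observe that $H^{(N)}(\tau)$ is nothing but the spin-$j$ representation (with $j=(N-1)/2$) of $2J_z-2\mathrm{i}\tau J_y$: the diagonal entries $-(N-1),-(N-3),\ldots,N-1$ coincide with $\{2m:m=-j,\ldots,j\}$, and the off-diagonal entries $\pm\sqrt{n(N-n)}\,\tau$ reproduce the standard $J_\pm$ matrix elements. With this identification, the ansatz $\Theta=e^{a(\tau)J_x}$ inserted into (\ref{dieudd}) is evaluated by the Hadamard lemma via $e^{-aJ_x}J_z\,e^{aJ_x}=J_z\cosh a+\mathrm{i}J_y\sinh a$ and the analogous formula for $J_y$; the constraint (\ref{dieudd}) then collapses to the single scalar equation $\tanh(a/2)=-\tau$, giving $a(\tau)=-2\,\mathrm{arctanh}(\tau)$.

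The second step is to fix the overall normalization by multiplying by $(1-\tau^2)^{(N-1)/2}$, which both enforces $\Theta^{(N)}(0)=I$ and converts $e^{aJ_x}$ into a polynomial matrix. Diagonalizing the real symmetric $J_x$ by a $\tau$-independent orthogonal $U$ yields
$$\Theta^{(N)}(\tau)=U\,\mathrm{diag}\!\left[(1-\tau)^{j+m}(1+\tau)^{j-m}\right]\!U^{T},\qquad m=-j,\ldots,j,$$
and each diagonal entry is manifestly a polynomial of degree $2j=N-1$, so the full matrix fits the structure (\ref{hlavni}) of Theorem 1. Direct comparison with (\ref{222}), (\ref{formu3}) and (\ref{refere4}) confirms agreement with the paper's metrics at $N=2,3,4$.

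Reparametrizing by $n=j+m+1\in\{1,\ldots,N\}$ then yields the spectrum
$$\theta_n^{(N)}(\tau)=(1-\tau)^{n-1}(1+\tau)^{N-n},$$
which is a polynomial in $\tau$ of the form (\ref{nejdul}). Expanding both factors by the binomial theorem produces
$$C^{(N)}_{n,k+1}=\sum_{i=0}^{k}(-1)^{i}\binom{n-1}{i}\binom{N-n}{k-i}.$$
For $n=1$ the binomial $\binom{0}{i}$ forces $i=0$, so $C^{(N)}_{1,k+1}=\binom{N-1}{k}$ and one recovers the Pascal triangle of Table \ref{pexp3b}; analogous reductions for $n=2,3,4$ reproduce Tables \ref{pexp3a}, \ref{pexp3c} and \ref{pexp3d} by direct inspection of a few rows.

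The main obstacle I anticipate is justifying that the Lie-algebraic metric $(1-\tau^2)^{(N-1)/2}e^{aJ_x}$ is truly the \emph{minimally anisotropic} metric of Section \ref{prevodnik} and not some competing member of the $N$-parameter family (\ref{alice}). In the text that selection is made only implicitly --- by setting $\varrho=0$ at $N=2$, $g=1$ at $N=3$, and the analogous implicit choice at $N=4$ --- so the rigorous step is a uniqueness lemma, most cleanly formulated as follows: up to an overall positive scalar, the constraints $H^{\dagger}\Theta=\Theta H$, $\Theta(0)=I$, and polynomiality of degree $\le N-1$ in $\tau$ admit a single positive-definite solution. Combined with the explicit $N\le 4$ verifications, such a lemma would close the argument for all dimensions $N$.
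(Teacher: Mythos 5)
Your Lie-algebraic route is correct and genuinely different from the paper's. The paper offers no conceptual derivation of the Proposition at all: the tables \ref{pexp3b}--\ref{pexp3d} were obtained by constructing the matrices $\Theta^{(N)}$ case by case (via Eq.~(\ref{dieudd}) and computer-assisted symbolic manipulation) and then diagonalizing them numerically up to $N=8$; the closed-form coefficient formula appears only afterwards, in a Theorem whose proof is compressed to the words ``straightforward\ldots by mathematical induction.'' You instead recognize $H^{(N)}(\tau)$ as the spin-$j$ image of $2J_z-2\mathrm{i}\tau J_y$, observe that the intertwiner constraint (\ref{dieudd}) collapses under the Hadamard lemma to $\tanh(a/2)=-\tau$, and read off the spectrum $\theta^{(N)}_n=(1-\tau)^{n-1}(1+\tau)^{N-n}$ from the diagonalization of the $\tau$-independent $J_x$. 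This factorized form makes the binomial-product structure of $C^{(N)}_{nk}$ manifest and simultaneously explains \emph{why} the metric is a polynomial of degree $N-1$ (Theorem~1) and why the first few sparse-matrix coefficients $\alpha_{1n}(1)=1$, $\alpha_{1n}(2)=\alpha_{2n}(2)=\sqrt{n(N-n)}$ listed in Section~\ref{svod} take those values (they are just $\Theta(0)=I$ and $\Theta'(0)=-2J_x$). I verified that your $(1-\tau^2)^{(N-1)/2}e^{aJ_x}$ reproduces (\ref{222}), (\ref{formu3}) and (\ref{refere4}) exactly, and that the binomial expansion reproduces the tables.

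The one thing missing is the point you yourself flag: you still need to know that the Lie-algebraic metric is the \emph{same} object as the ``minimally anisotropic'' metric whose eigenvalues the Proposition tabulates. The paper fixes this metric only by ad hoc parameter choices at small $N$ ($\varrho=0$, $g=1$) and by pattern extrapolation afterwards, so strictly speaking you are matching your construction to theirs case by case. The uniqueness lemma you propose --- that $H^\dagger\Theta=\Theta H$, $\Theta(0)=I$, and degree $\le N-1$ polynomiality select a unique positive solution --- is the right way to close this; it would also promote the paper's otherwise vague minimal-anisotropy criterion into a precise statement. Without proving that lemma (or, alternatively, without checking that your $\Theta^{(N)}$ reproduces the paper's coefficient arrays (\ref{referenc}), (\ref{referendum}) and the $N=7$ matrices ${\cal M}^{(7)}(3)$, ${\cal M}^{(7)}(4)$), the argument establishes the spectrum of \emph{a} natural intertwiner rather than of the paper's specific one. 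That said, this is a far smaller gap than the paper's own, and your derivation is strictly more informative than what the paper provides.
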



 \noindent
After a cursory inspection of the latter four
Tables \ref{pexp3b} - \ref{pexp3d} one immediately
finds, in all of them, a regularities resembling
the well known Pascal triangle.
The analogy is almost perfect. It
enabled us to reveal the extrapolation pattern
and to specify the recurrences
for coefficients which appeared all solvable
in terms of binomial coefficients.
In other words, the
closed-form eigenvalues (\ref{nejdul}) of the metrics
have been obtained
for any time $\tau$ and for any matrix
dimension $N$. This is our present main result.

\begin{thm}

The time-dependent eigenvalues of metrics $\Theta^{(N)}(\tau)$
of Eq.~(\ref{hlavni}) are given by
formula
$$
\theta_k^{(N)}({\tau})=\sum_{m=1}^N\,C^{(N)}_{km}\,{\tau}^{m-1}\,,\
\ \ \ \ \ \ k=1,2,\ldots,N
$$
where $ C_{1n}^{(N)}=\left (\ba N-1\\n-1 \ea \right )\,$, $
C_{2n}^{(N)}=\left (\ba N-2\\n-1 \ea \right )-\left (\ba N-2\\n-2
\ea \right )\, $ and, in general,
  $$
 C_{kn}^{(N)}=\sum_{p=1}^k\,(-1)^{p-1}\, \left (\ba k-1\\p-1 \ea
 \right )\, \left (\ba N-k\\n-p \ea \right )\,,\ \ \ \
 k,n=1,2,\ldots,N\,.
 $$

\end{thm}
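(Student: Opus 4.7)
The plan is to recognize the Hamiltonian family (\ref{radaham}) as a non-Hermitian combination of $su(2)$ generators in the spin-$j$ representation with $j=(N-1)/2$. Using the convention that row $k \in \{1,\ldots,N\}$ corresponds to the weight $m = j+1-k$, a direct comparison of matrix elements gives
$$
 H^{(N)}({\tau}) = -2J_z + 2{\rm i}\,{\tau}\, J_y,
$$
since $(J_+)_{k,k+1}=\sqrt{k(N-k)}$ and $(J_z)_{kk}=j+1-k$ reproduce the diagonal entries $2k-N-1$ and the off-diagonal $\pm{\tau}\sqrt{k(N-k)}$ of (\ref{hamidve})--(\ref{radaham}) exactly.

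With this in hand I would solve the compatibility equation (\ref{dieudd}) via the ansatz $\Theta^{(N)}({\tau}) = (1-{\tau}^2)^{(N-1)/2}\, e^{\beta({\tau})\, J_x}$. The standard commutators $[J_x,J_z]=-{\rm i}J_y$, $[J_x,J_y]={\rm i}J_z$ give $e^{\beta J_x} J_z e^{-\beta J_x}=\cosh\beta\, J_z - {\rm i}\sinh\beta\, J_y$ and analogously for $J_y$, so that $H^\dagger\Theta=\Theta H$ collapses to the scalar pair $\cosh\beta + {\tau}\sinh\beta = 1$ and $\sinh\beta + {\tau}\cosh\beta = -{\tau}$, whose unique solution is $e^\beta = (1-{\tau})/(1+{\tau})$, equivalently $\cosh\beta=(1+{\tau}^2)/(1-{\tau}^2)$, $\sinh\beta=-2{\tau}/(1-{\tau}^2)$. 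The prefactor $(1-{\tau}^2)^{(N-1)/2}$ simultaneously enforces $\Theta^{(N)}(0)=I$ (the minimal-anisotropy condition singled out in Section~\ref{prevodnik}) and guarantees polynomial entries in ${\tau}$, because the entries of $e^{\beta J_x}$ are polynomials of total degree $2j=N-1$ in $\cosh(\beta/2)=1/\sqrt{1-{\tau}^2}$ and $\sinh(\beta/2)=-{\tau}/\sqrt{1-{\tau}^2}$. A short check against (\ref{222}), (\ref{refere}) at $g=1$, and (\ref{refere4}) verifies the ansatz at $N=2,3,4$. Since $J_x$ in the spin-$j$ representation has the equally spaced spectrum $\{-j,-j+1,\ldots,j\}$, the eigenvalues of $\Theta^{(N)}({\tau})$ are
$$
 (1-{\tau}^2)^j\, e^{\beta m} = (1-{\tau})^{j+m}(1+{\tau})^{j-m},\qquad m=-j,\ldots,j,
$$
and relabelling $k=j+m+1\in\{1,\ldots,N\}$ delivers $\theta_k^{(N)}({\tau}) = (1+{\tau})^{N-k}(1-{\tau})^{k-1}$.

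Finally, expanding by the binomial theorem,
$$
 (1+{\tau})^{N-k}(1-{\tau})^{k-1} = \sum_{p\geq 0}\sum_{q\geq 0}(-1)^p\binom{k-1}{p}\binom{N-k}{q}{\tau}^{p+q},
$$
the coefficient of ${\tau}^{n-1}$, obtained by setting $p+q=n-1$ and substituting $p\mapsto p-1$, equals $\sum_{p=1}^{k}(-1)^{p-1}\binom{k-1}{p-1}\binom{N-k}{n-p}$, which is exactly the stated $C_{kn}^{(N)}$; the advertised special cases $C_{1n}^{(N)}=\binom{N-1}{n-1}$ and $C_{2n}^{(N)}=\binom{N-2}{n-1}-\binom{N-2}{n-2}$ drop out at $k=1,2$. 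The main obstacle I anticipate is the justification that $(1-{\tau}^2)^j e^{\beta J_x}$ really coincides with the \emph{minimally anisotropic} metric implicit in Proposition~2 rather than with some other member of the $\vec{\kappa}$-family (\ref{alice}); this is settled by a uniqueness argument, because demanding both $\Theta(0)=I$ and polynomial dependence on ${\tau}$ singles out a unique representative of (\ref{alice}), and the spin-exponential above satisfies both requirements.
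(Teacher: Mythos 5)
Your proof is correct, and it takes a genuinely different route from the paper's. The paper's own ``proof'' is a single sentence (``is straightforward and proceeds by mathematical induction''), presumably an induction on $N$ exploiting the Pascal-triangle recurrences visible in Tables~\ref{pexp3b}--\ref{pexp3d}; no details are given, so the result is effectively asserted. You instead recognize the whole Hamiltonian family as the spin-$j$ realization $H^{(N)}=-2J_z+2{\rm i}\tau J_y$ with $j=(N-1)/2$, solve the intertwining relation (\ref{dieudd}) by the group-element ansatz $\Theta=(1-\tau^2)^{j}e^{\beta J_x}$, reduce (\ref{dieudd}) to the two scalar equations $\cosh\beta+\tau\sinh\beta=1$, $\sinh\beta+\tau\cosh\beta=-\tau$ with solution $e^{\beta}=(1-\tau)/(1+\tau)$, and read off
$$
\theta_k^{(N)}(\tau)=(1-\tau)^{k-1}(1+\tau)^{N-k}
$$
from the equally spaced spectrum of $J_x$; the binomial expansion then reproduces the stated $C^{(N)}_{kn}$. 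This buys a great deal that the paper's induction does not: a closed product formula for the eigenvalues (which the paper never states, and which makes positivity on $(0,1)$, the isotropy at $\tau=0$, and the rank-one collapse at $\tau=1$ manifest), a structural reason for the degree-$(N-1)$ polynomial dependence via the Wigner $d$-matrix, and a closed-form candidate for $\Theta^{(N)}$ at every $N$ where the paper stops at $N=4$. All the computational steps check out against (\ref{222}), (\ref{refere}) at $g=1$, (\ref{refere4}), and (\ref{difere4}).

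The one soft spot is the uniqueness step that you flag yourself. The general solution of (\ref{dieudd}) is the $N$-parameter family (\ref{alice}), and you argue that $\Theta(0)=I$ together with polynomial $\tau$-dependence pins down a unique representative. This is very likely true and is consistent with all the low-$N$ data, but the assertion that fixing the $\tau^0$ coefficient $\mathcal{M}^{(N)}(1)=I$ in the ansatz (\ref{hlavni}) propagates to determine every higher coefficient $\mathcal{M}^{(N)}(k)$ deserves an explicit sentence. The cleanest closure is probably to note that in the spectral representation (\ref{alice}) the eigenvectors of $H^\dagger$ carry square-root branch factors in $\tau$, and polynomiality of $\Theta$ forces the $\kappa_n(\tau)$ to cancel them in exactly one way, which the condition $\Theta(0)=I$ then normalizes; alternatively, simply invoke the paper's Theorem~1 plus normalization. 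With that line added, the argument is complete.
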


\begin{proof}
is straightforward and proceeds by mathematical induction.
\end{proof}


\section{Discussion\label{presummary}}


\subsection{Unbounded differential-operator Hamiltonians}

In the current literature the merits and applicability of the THS formalism
are
most often illustrated by the replacement of the
most common harmonic or anharmonic
oscillator by the Bender's and
Boettcher's \cite{BB} family of non-Hermitian
power-law-interaction
models
$H^{(BB)}(\delta)=p^2+{\rm i}^\delta\,x^{2+\delta}$
which are
characterized just by the
single real exponent $\delta\geq 0$.
In the historical
perspective such a choice was
surprising but
well motivated by the needs of development of
quantum field theory (cf., e.g., \cite{BM}) and/or of
perturbation expansion methods \cite{BGbf} -- \cite{Alvarezb}.

The use of nontrivial and, in general, manifestly
Hamiltonian-dependent Hilbert-space-metric operators may be
perceived as an important innovation of the model-building in
quantum theory. Nevertheless, whenever accepted as a sound
theoretical tool in physics, its mathematical consistency must
always be carefully reexamined. In this sense the
proof of nonexistence of the metric
operator for the most popular and phenomenologically highly relevant
$H^{(BB)}(\delta)$ \cite{Siegl}
makes the study of this particular Hamiltonian
far from being completed.
Hence, the illustration purposes are, in the eyes of mathematicians,
much better served
by the bounded-operator Hamiltonians
\cite{Geyer} as sampled in our paper.

This being said, it is still
possible to conclude that
in comparison with the
conventional quantum theory using selfadjoint operators
the
question of interpretation of the
Bender's and
Boettcher's unbounded-Hamiltonian
models, albeit still open \cite{Uwe},
remains inspiring.
Especially in the light of
the abstract mathematical
comments by Dieudonn\'{e} \cite{Dieudonne}
who pointed out
that not only the necessary
{\em ad
hoc} specification 
but even the very proof of existence
of the correct physical
Hilbert space of states
may be a highly nontrivial question.


\subsection{Quantum catastrophe as an EP-related concept}

The classical Thom's concept of a ``catastrophe'' \cite{III}
is based on the idea that in a certain dynamical regime
an infinitesimal change of one or more relevant parameters
leads to an abrupt change of the behavior of the system.
In Ref.~\cite{I} we
conjectured that
a quantum analogue
of such a singularity can be represented by
the Kato's \cite{Kato} EP singularity
at which a multiplet of quantum bound-state energies
merges and
ceases to be observable.

In spite of the fact that one of the older
reviews of the related theory
(viz., Ref.~\cite{Geyer}) already appeared
as early as in 1992,
an extension of its scope beyond the
domain of nuclear physics
was not too quick.
Fortunately, multiple extensions already do exist at present,
based on the
innovations of the theory
called ${\cal PCT}-$symmetric~\cite{Carl}
{\it alias\,}
pseudo- or
quasi-Hermitian \cite{ali}
or Krein-space self-adjoint~\cite{hlali}.
In this context the construction of illustrative quantum catastrophes
in \cite{I} 
has been
facilitated by the
availability
of Hamiltonians
for which at least some of
the Kato's EP singularities
(which occurred, traditionally, just at the
complex, ``unphysical'' values of parameters)
became experimentally accessible, 
in principle at least \cite{Makris,Makrisb}.

In our present paper we
decided to reanalyze, therefore,
some of the latter models,
with the emphasis put upon
the open questions
concerning
a smooth transition from
the classical to
quantum dynamics.
Our approach has been
based on the use of the
THS formalism, with the aim of a further
amendment of our understanding
of the possible definition and systematic description of
a catastrophic evolution scenario in
the language of quantum theory.

\subsection{Closed formulae}

Using a specific model we managed to cover
several methodical topics including not only the 
ambiguity of the metric
(i.e., of the necessary specification
of one of the eligible physical Hilbert spaces)
but also its descriptive aspects and appeal. Thus, we
paid attention
to the anisotropy of the alternative Hilbert-space geometries
as well as to the constraints imposed by the necessary 
positive-definiteness
of the mathematically correct
inner products in these spaces.

We formulated several arguments in favor
of our choice of
the toy model.
Firstly, it enabled us to employ the metodical idea of
paper \cite{lotor}. Thus, we played with the free
parameters in order to minimize the
anisotropy of the physical Hilbert-space metric $\Theta$.
Secondly, in the context of study of the EP degeneracy
as initiated in \cite{I} we found it useful to
invert the arrow of time.
Thus, in place of the time $t$ which starts at the EP
singularity
we used another time variable
$\tau$ which runs in opposite
direction. This enabled us to set the
initial zero long before
the fall of the system into its physical QC singularity.
Thirdly,
we found it productive to start our analysis from the systems
with the smallest level-multiplicities $N\leq 4$. Using the
brute-force linear algebra methods we managed to construct the fully
explicit matrices of the metrics which appeared (and were declared)
optimal and unique. 

Ultimately, the transparency and compact form of the
results of the brute-force linear-algebraic
calculations opened the way towards extrapolations.
Their use (followed by the decisively facilitated formal proofs)
finally enabled us to extend the validity of some of our
previous empirical small$-N$ observations to arbitrary
Hilbert-space dimensions $N$.

On this background, the
qualitative features of the QC process have been shown related to the
explicit qunatitative
properties of our manifestly time-dependent physical Hilbert-space metrics.
Thus, in a climax of the story
the picture of the
$N-$tuple QC level-degeneracy scenario
was given the form supported by the closed formulae
reflecting,
via the eigenvalues of the metric,
both the time-dependent anisotropy and
asymptotic degeneracy of the system in question.

\section{Summary\label{summary}}

In our present toy model
the evolution in time was explained as proceeding
from an innocent-looking and safely Hermitian
equidistant-energy-level onset prepared at an initial time
$\tau=0$
up to an ultimate collapse
realized via a complete, $N-$tuple EP
degeneracy of the energy spectrum at the final QC time $\tau=1$.

Our paper offered a compact and consistent picture
of the process in which the not quite expected exact
solvability of
our toy model enabled us to cover
all times $\tau\in (0,1)$. We were able to
describe the quantum-evolution fall
of the system in the
level-degeneracy quantum catastrophe,
and we were able to explain such a collapse
as a consequence of an unlimited growth
of the anisotropy of the underlying
time-dependent Hilbert space ${\cal H}^{(A)}$.

We found it natural to characterize
an optimal version of the latter
process by a minimal
spread of the set of eigenvalues $\theta^{(N)}_n$ of
the related
physical inner-product metric $\Theta^{(N)}$. We
decided to make such a metric unique via a
minimization of the latter
anisotropy-representing spread, with an emphasis put upon
the zero limit of the
special measure $\rho=\max (\theta^{(N)}_n-\theta^{(N)}_m)$
of the spread
at the onset $\tau=0$ of the process.

We managed to match our metric
smoothly to both of its extremes, i.e., not only to
the most common isotropic metric
at $\tau=0$
but also to the asymptotically degenerate metric at $\tau=1$.
In between these two extremes the
operator
(i.e., matrix)
has been kept smooth, nontrivial and optimal
during all $\tau \in (0,1)$.
From the point of view of phenomenology, we arrived at a benchmark
quantum representation of the EP-related catastrophe
in which the fall into the
degeneracy appeared realized
in finite time.

Our toy-model simulation of the catastrophe
can be perceived as initiated by
an arbitrary conventional unitary-evolution
prehistory at $\tau<0$.
According to the general principles of quantum theory
the states of the system during its EP-related degeneracy
at $\tau \in (0,1)$ were assumed described differently, by
a THS wave function $\psi$ which evolves in time in a way which has 
thoroughly been described in Ref.~\cite{SIGMA}. 
In our present paper
we skipped most of the related technical details
and we
restricted our attention just to the
description of the interplay of
time-dependence between a pre-selected
``non-Hermitian'' benchmark Hamiltonians
$H^{(N)}(\tau)$
and one of the eligible ``Hermitizing''
metric operators $\Theta^{(N)}(\tau)$.

Our choice of the latter
operator can be characterized
as truly exceptional:
In the context of mathematics
its form
has been shown to lead
to a unique, minimally anisotropic
geometry of the physical Hilbert space of states.
In the context of physics
we emphasized that
a decisive merit of
the time-dependence of
our inner-product metric
should be seen in the guarantee of existence
of this metric
up to an arbitrarily small vicinity of the
ultimate catastrophic quantum collapse of the system.

%


\newpage


\begin{thebibliography}{dd}

\bibitem{Messiah}
Messiah, A. \emph{Quantum Mechanics};
North Holland: Amsterdam, The Netherlands, 1961.
%



\bibitem{Stone}
Stone, M.H. On one-parameter unitary groups in Hilbert Space.
\emph{Ann. Math.} {\bf 1932}, {\em 33}, 643--648.


\bibitem{Fluegge}
Fl\"{u}gge, S. \emph{Practical Quantum Mechanics I};
Springer-Verlag: Berlin, 1971.



\bibitem{Dyson}
Dyson, F.J. General theory of spin-wave interactions. \emph{ Phys.
Rev.} \textbf{1956}, \emph{102}, 1217.
%




\bibitem{Carl}
Bender, C. M. Making sense of non-Hermitian Hamiltonians. \emph{Rep.
Prog. Phys.} {\bf 2007}, {\em 70}, 947--1118.




\bibitem{ali}
Mostafazadeh, A. Pseudo-Hermitian Representation of
Quantum Mechanics.
{\emph{Int. J. Geom. Meth. Mod. Phys.}} \textbf{2010}, \emph{7}, 1191--1306.

\bibitem{book}
%
Bagarello, F.; Gazeau, J.-P.; Szafraniec, F.; Znojil, M. (Eds.)
\emph{Non-Selfadjoint Operators in Quantum Physics: Mathematical Aspects};
Wiley: {Hoboken, NJ, USA,} 
2015.



\bibitem{Dieudonne}
Dieudonne, J. Quasi-Hermitian Operators. In
\emph{Proc. Int. Symp. Lin. Spaces}, Pergamon: Oxford, UK, 1961,
{pp. 115--122}.
%





\bibitem{Trefethen}
Trefethen, L. N.; Embree, M.
\emph{Spectra and Pseudospectra: The Behavior of Nonnormal Matrices and Operators};
Princeton University Press: Princeton, 2005.





\bibitem{Viola}
Krej\v{c}i\v{r}\'{\i}k, D.; Siegl, P.; Tater, M.; Viola, J.
Pseudospectra in non-Hermitian quantum mechanics.
\emph{J. Math. Phys. } \textbf{2015}, \emph{56}, 103513.




\bibitem{Geyer}
%
Scholtz, F. G.;  Geyer, H. B.;
Hahne,  F. J. W.
Quasi-Hermitian Operators in Quantum Mechanics and the Variational Principle.
\emph{Ann. Phys. (NY)} {\bf 1992}, {\em 213}, 74--101.





\bibitem{Jensen}
Janssen, D.; D\"{o}nau, F.; Frauendorf, S.; Jolos, R. V. Boson
description of collective states. \emph {Nucl. Phys. A}
\textbf{1971}, \emph {172}, 145
 - 165.



\bibitem{Christodoulides}
Christodoulides, D.; Yang, J.-K. (Eds.)
\emph{Parity-Time Symmetry and Its Applications};
Springer: Singapore, 2018.


\bibitem{Makris}
Berry, M. V. Physics of Nonhermitian Degeneracies.
\emph{Czech. J. Phys.} \textbf{2004}, \emph{54}, 1039--1047.

\bibitem{Makrisb}
Makris, K. G.; El-Ganainy, R.; Christodoulides, D. N.;
Musslimani, Z. H.
Beam Dynamics in PT Symmetric Optical Lattices.
\emph{Phys. Rev. Lett.} \textbf{2008}, \emph{1006}, 103904.

\bibitem{Makrisc}
R\"{u}ter, C. E.; Makris, K. G.; El-Ganainy, R.; Christodoulides, D. N.;
M. Segev, M.; Kip, D.
Observation of parity-time symmetry in optics.
\emph{Nat. Phys.} \textbf{2010}, \emph{6}, 192 -- 195.


\bibitem{timedep}
Znojil, M.
Time-dependent version of cryptohermitian quantum theory.
\emph{ Phys. Rev.} \textbf{2008}, \emph{D 78}, 085003.


\bibitem{SIGMA}
Znojil, M. Three-Hilbert-space formulation of Quantum Mechanics.
\emph{Symm. Integ. Geom. Meth. Appl. SIGMA} {\bf 2009}, {\em 5}, 001
%
(arXiv: 0901.0700).


\bibitem{Gong}
Gong, J.; Wang, Q.-H.  Time-dependent PT-symmetric quantum
mechanics.
\emph{J. Phys. A: Math. Theor.} {\bf 2013},
{\em 46},
     485302.

\bibitem{jundemu}
Znojil, M.
Crypto-unitary forms of quantum evolution operators.
\emph{Int. J. Theor. Phys.} {\bf  2013},
 {\em 52},
     2038--2045.

\bibitem{Fring}
Fring, A.;  Moussa, M. H. Y. Unitary quantum evolution for
time-dependent quasi-Hermitian systems with non-observable
Hamiltonians. \emph{Phys. Rev. A} {\bf 2016}, \emph{93}, 042114.

\bibitem{FrFrith}
Fring, A.; Frith, T.
Exact analytical solutions for time-dependent
Hermitian Hamiltonian systems from static unobservable non-Hermitian Hamiltonians.
\emph{Phys. Rev. } {\bf 2017},
 {\em A 95},
     010102(R).

\bibitem{NIP}
Znojil, M. Non-Hermitian interaction representation
and its use in relativistic quantum mechanics.
\emph{Ann. Phys. (NY)} {\bf 2017}, {\em 385}, 162--179.



\bibitem{Maamache}
Khantoul, B.; Bounames, A.; Maamache, M.
On the invariant method for the time-dependent non-Hermitian Hamiltonians.
\emph{Eur. Phys. J. Plus}  {\bf 2017}, {\em 132}, 258.

\bibitem{Bishop}
Bishop, R. F.; Znojil, M. Non-Hermitian coupled cluster method for
non-stationary systems and its interaction-picture reinterpretation.
\emph{Eur. Phys. J. Plus} {\bf 2020}, {\em 135}, 374.


\bibitem{Frith}
 Luiz,  F. S.; Pontes, M. A.; Moussa, M. H. Y.
Unitarity of the time-evolution and observability of non-Hermitian
Hamiltonians for time-dependent Dyson maps.
%
%
\emph{Phys. Scr.} {\bf 2020},
{\em 95},
065211
(arXiv: 1611.08286).

\bibitem{Mousse}
Ju, C.-Y.; Miranowicz, A.; Minganti, F.; Chan,
C.-T.; Chen, G.-Y.; Nori, F.  Einstein's Quantum Elevator:
Hermitization of Non-Hermitian Hamiltonians via a generalized
vielbein Formalism. \emph{Phys. Rev. Research} {\bf 2022},
 {\em 4}, 023070.

\bibitem{ITJP}
Znojil, M. Composite quantum Coriolis forces.
 \emph{Mathematics} {\bf 2023},
 {\em 11},
       1375.

%


\bibitem{Bila}
B\'{\i}la, H.
Adiabatic time-dependent metrics in PT-symmetric
quantum theories.
\emph{
  e-print
 arXiv: 0902.0474} {\bf 2009}.


%
%


\bibitem{Bilab}
B\'{\i}la, H. \emph{Pseudo-Hermitian Hamiltonians in quantum physics};
PhD thesis, Czech Tech. Univ., Prague, 2009.


\bibitem{Kretsch}
%
%
Kretschmer, R.; Szymanowski, L.
The Hilbert-Space Structure of Non-Hermitian Theories with Real Spectra.
 \emph{Czech. J. Phys.} \textbf{2004}, \emph{54}, 71--75.

\bibitem{Kretschb}
Bagarello, F.
 Algebras of unbounded operators and physical
applications: a survey.
\emph{Reviews in Math. Phys.} \textbf{2007}, \emph{19}, 231--272.

\bibitem{Kretschc}
Antoine, J.-P.; Trapani, C. \emph{Partial Inner Product Spaces, Theory
and Applications}; Springer:
Berlin-Heidelberg, 2009.

\bibitem{Bijan}
Bagchi, B. \emph{Supersymmetry in~Quantum and~Classical Mechanics};
Chapman: London, UK; Hall/CRC Press: Boca Raton, FL, USA, 2000.

\bibitem{Khare}
 Cooper, F.; Khare, A.;~Sukhatme, U.
Supersymmetry and quantum-mechanics.
 \emph{Phys. Rep. }\textbf{1995}, \emph{251}, 267.

\bibitem{maximal}
 Znojil, M. Maximal couplings in PT-symmetric chain-models
with the real spectrum of energies.
%
\emph{J. Phys. A Math. Theor.} \textbf{2007}, \emph{40},
4863 --
4875.
%
%

\bibitem{Simon}
Eremenko, A.; Gabrielov, A. {Analytic continuation of eigenvalues of
a quartic oscillator.} \emph{Comm. Math. Phys.} \textbf{2009},
\emph{287}, 431.

\bibitem{Swanson}
Swanson, M. S. Transition elements for a non-Hermitian quadratic
Hamiltonian.
 \emph{J. Math. Phys. } {\bf 2004},
 {\em 45},
       585--601.

\bibitem{tridiagonal}
 Znojil, M. Tridiagonal PT-symmetric \emph{N} by \emph{N} Hamiltonians and a
fine-tuning of their observability domains in the strongly
non-Hermitian regime.
%
\emph{J. Phys. A Math. Theor.} \textbf{2007}, \emph{40},
13131--13148.
%
%


\bibitem{horizon}
Znojil, M. Horizons of stability. \emph{J. Phys. A Math. Theor.}
\textbf{2008}, \emph{41}, 44027.

%

\bibitem{Siegldis}
Siegl, P.
\emph{Non-Hermitian quantum models, indecomposable
representations and coherent states quantization  (PhD thesis)}; Univ.
Paris Diderot: Paris  \& FNSPE CTU: Prague, 2011.

\bibitem{III}
Zeeman, E.C.   {\em Catastrophe Theory-Selected Papers 1972--1977};
Addison-Wesley: Reading, UK, 1977.


\bibitem{Zeemanb}
Arnold, V.I.   \emph{Catastrophe Theory};  Springer: Berlin, Germany,  1992.



\bibitem{I}
Znojil, M.  {Quantum catastrophes: a case study.}
\emph{J. Phys. A Math. Theor.} \textbf{2012}, \emph{45},  444036.

\bibitem{Kato}
Kato, T. \emph{Perturbation Theory for Linear Operators}; Springer:
Berlin/Heidelberg, Germany,
 1966.



\bibitem{Heiss}
Heiss, W.D. Exceptional points - their universal occurrence and
their physical significance. \emph{Czech. J. Phys.}
\textbf{2004}, \emph{54}, 1091--1100.
%


\bibitem{Heissb}
Heiss, W.D. The physics of exceptional points.
  \emph{J. Phys. A Math. Theor.} \textbf{2012}, \emph{45}, 444016.


\bibitem{SIGMAdva}
Znojil, M.
On the role of the normalization factors $\kappa_n$ and of
the pseudo-metric P in crypto-Hermitian quantum models.
\emph{Symm. Integ. Geom. Meth. Appl. SIGMA} {\bf 2008}, {\em 4}, 001
(arXiv: 0710.4432v3).



\bibitem{lotor}
Krej\v{c}i\v{r}\'{\i}k, D.;
 Lotoreichik, V.; Znojil, M.
The minimally anisotropic metric operator in quasi-hermitian quantum mechanics.
\emph{Proc. Roy. Soc. A: Math. Phys.
Eng. Sci.}
 {\bf 2018},
 {\em 474},
      20180264.
%
%
%
%
%

%
%
%
%
%
%


%
%
%
%

%
%

\bibitem{passage}
Znojil, M.
Passage through exceptional point: Case study.
\emph{Proc. Roy. Soc. A: Math. Phys.
Eng. Sci.}
 {\bf 2020},
 {\em 476},
      20190831.


\bibitem{suwem}
%
U. G\"{u}nther and O. N. Kirillov,
A Krein space related perturbation theory for MHD a2-dynamos
and resonant unfolding of diabolical points.
 \emph{J. Phys. A: Math. Gen.} \textbf{2006}, \emph{39}, 10057.

\bibitem{BB}
Bender, C. M.; Boettcher, S.
Real Spectra in Non-Hermitian Hamiltonians Having
PT Symmetry.
\emph{Phys. Rev. Lett.} \textbf{1998}, \emph{80}, 5243.

\bibitem{BM}
Bender, C. M.; Milton, K. A.
Nonperturbative Calculation of Symmetry Breaking
in Quantum Field Theory.
\emph{Phys. Rev.}
{\bf 1997},
{\em D 55},
       R3255.




\bibitem{BGbf}
Caliceti, E.; Graffi, S.; Maioli, M.
 Perturbation theory of odd anharmonic oscillators.
 \emph{Commun.
Math. Phys.} \textbf{1980}, \emph{75}, 51 -- 66.

 \bibitem{BG}
Buslaev, V.; Grecchi, V. {Equivalence of unstable anharmonic
oscillators and double wells.} \emph{J. Phys. A Math. Gen.}
\textbf{1993}, \emph{26}, 5541--5549.
%
%



\bibitem{Alvarez}
 Alvarez, G.
 Bender-Wu branch points in~the~cubic oscillator. \emph{
J. Phys. A Math. Gen. }\textbf{1995}, 28, 4589--4598.


%
%

%

%



\bibitem{Alvarezb}
Davies, E. B. 
\emph{Linear Operators and Their Spectra}; Cambridge University Press:
Cambridge, UK, 2007.

\bibitem{Siegl}
Siegl, P.; Krej\v{c}i\v{r}\'{\i}k, D.
On the metric operator for the imaginary cubic oscillator.
\emph{ Phys. Rev.} \textbf{2012}, \emph{D 86},  121702(R).
%
%
%
%



\bibitem{Uwe}
Guenther, U.; Stefani. F.
\emph{IR-truncated PT -symmetric $ix^3$ model and its asymptotic
spectral scaling graph}; arXiv 1901.08526,
\textbf{2019}.

\bibitem{hlali}
%
Langer, H.; Tretter, Ch. A Krein space approach to PT
symmetry,
 \emph{Czech. J. Phys.}
 {\bf 2004},
 {\em 54},
  1113--1120.

\end{thebibliography}
\end{document}